\newtheorem{theorem}{Theorem}
\newtheorem{lemma}[theorem]{Lemma}
\newenvironment{myproof}[1][\proofname]{\proof[#1]\mbox{}\\*}{\endproof}
\theoremstyle{remark}
\newtheorem*{remark}{Remark}
\newcommand{\be}{\begin{equation}}
\newcommand{\ee}{\end{equation}}
\newcommand{\ea}[1]{\begin{align}#1\end{align}}
\newcommand{\bmt}{\begin{pmatrix}}
\newcommand{\emt}{\end{pmatrix}}
\newcommand{\fd}[1]{\mathbb{#1}}
\newcommand{\mbf}{\mathbf}
\newcommand{\la}{\langle}
\newcommand{\ra}{\rangle}
\DeclareMathOperator{\GL}{GL}
\DeclareMathOperator{\SL}{SL}
\DeclareMathOperator{\PSL}{PSL}
\DeclareMathOperator{\F}{\mathbb{F}}
\DeclareMathOperator{\PGL}{PGL}
\DeclareMathOperator{\tr}{tr}
\DeclareMathOperator{\Tr}{Tr}
\begin{document}

\begin{titlepage}
\begin{center}
\bfseries  Galois Unitaries, Mutually Unbiased Bases, and MUB-balanced states
\end{center}
\vspace{1 cm}
\begin{center} D.M. Appleby
\\
\emph{Centre for Engineered Quantum Systems, School of Physics, The University of Sydney, Sydney, NSW 2006, Australia}
 \end{center}

\vspace{0.2 cm}

\begin{center}
Ingemar Bengtsson\\
 \emph{Stockholms Universitet, AlbaNova,
  Fysikum, S-106 91
  Stockholm, Sweden}
 \end{center}

\vspace{0.2 cm }

\begin{center}
Hoan Bui Dang
\\
\emph{Perimeter Institute for Theoretical Physics and University of Waterloo,  Waterloo, Ontario, Canada}
\end{center}

\vspace{0.2 cm }

\vspace{1 cm}

\begin{center} \textbf{Abstract}

\vspace{0.1cm}

\parbox{12 cm }{A Galois unitary is a generalization of the notion of anti-unitary 
operators. They act only on those vectors in Hilbert space whose entries belong 
to some chosen number field. For Mutually Unbiased Bases the relevant number field 
is a cyclotomic field. By including Galois unitaries we are able to remove a 
mismatch between the finite projective group acting on the bases on the one hand, 
and the set of those permutations of the bases that can be implemented as transformations 
in Hilbert space on the other hand. In particular we show that there exist 
transformations that cycle through all the bases in all dimensions $d = p^n$ where 
$p$ is an odd prime and the exponent $n$ is odd. (For even primes unitary MUB-cyclers 
exist.) These transformations have eigenvectors, which are MUB-balanced states (i.e.\ rotationally symmetric states in the original terminology of Wootters and Sussman) if and 
only if $d = 3$ modulo 4. We conjecture that this construction yields all such states in odd prime power dimension.
}
\vspace{0.35 cm}
\parbox{12 cm }{}
\end{center}
\end{titlepage}

\section{Introduction}
Symmetries in quantum mechanics are described by unitary 
or anti-unitary operators \cite{Wigner}. Recently it was noted that the 
notion of anti-unitary operators can be generalized to something dubbed 
``g-unitaries'', where the ``g'' stands for Galois and means 
that these operations can be applied only to suitably restricted vectors in 
Hilbert space 
\cite{AYAZ}. This new idea deserves to be followed up. Here we will 
show that g-unitaries 
have a role to play when thinking about a problem raised by Arnold in his 
last book \cite{Arnold}. His starting point is the observation that certain finite 
sets can be interpreted as projective lines over finite Galois fields (and 
we add that the name of Galois now occurs for the second time for a separate 
reason). One of Arnold's problems is to understand the intrinsic characterization 
of those permutations of the set that arise as projective permutations. He also 
asks for applications to the natural sciences. We will argue that in fact 
this problem is relevant when one is trying to understand the shape of quantum 
state spaces, and that g-unitaries enter at a key point of the argument.  

What we have in mind is the well known fact that Mutually Unbiased Bases (MUB) 
form a projective line over a finite Galois field, when the dimension $d$ of 
Hilbert space is a power of a prime number $p$. This is connected to the existence of 
the finite affine plane on which discrete Wigner functions are defined \cite{Wootters}.   
There are $d+1$ MUB and altogether $d(d+1)$ unit vectors involved. It turns out that 
the permutations of bases that can be effected by unitary transformations 
are in fact the projective transformations that interested Arnold. However, if 
$d$ is odd there are additional projective transformations that do not arise 
in this way.
It is at this point that g-unitaries enter the 
story to provide some of the missing symmetries.  Moreover they have a geometric 
interpretation in terms of rotations in Bloch space. When viewed 
as projection operators the MUB vectors give rise to a polytope which can be 
described by noting that the bases span $d+1$ totally orthogonal planes 
in Bloch space \cite{Asa}.  It plays a major role in fault tolerant 
quantum computation \cite{Cormick}, and in prime dimensions it is known as the 
stabilizer polytope. To avoid a possible ambiguity in prime power dimensions 
we will instead refer to it as the complementarity polytope. (When the dimension 
is a power of a prime the relevant Heisenberg group yields a set of several 
interlocking MUBs. We focus on just one of them.) The symmetry group of 
the complementarity polytope gives rise to arbitrary permutations of 
the $d+1$ planes spanned by the bases, but there is a natural restriction which 
gives rise to precisely the projective permutations including those effected 
by g-unitaries.   

We will pay special attention to MUB-cyclers, that is transformations of order 
$d+1$ cycling through the entire set of MUB. If $d$ is even unitary MUB 
cyclers exist \cite{Sussman, Seyfarth}. If $d=p^n$ is odd and $d$ = 3 mod 4,
anti-unitary MUB-cyclers exist \cite{Appleby}. 
We will find g-unitary MUB-cyclers for all $d = p^n$ where $n$ is odd. This 
seems to us to be an interesting fact, even though we freely admit that unitary 
MUB-cyclers are the important ones from the point of view of the experimentalist: 
they enable us to reach any MU basis by iterating a single operation in the lab. 
Implementing g-unitaries in the lab will be very hard! However, they remain 
interesting from the point of view of Arnold's problem. 

One important reason why MUB-cyclers deserve attention is that one expects 
vectors invariant under a MUB-cycler to be MUB-balanced states in the terminology of Amburg et al~\cite{ASSW} (or rotationally symmetric states in the original terminology of Wootters and Sussman~\cite{Sussman}). Such states are defined by 
the property that the $d+1$ probability vectors obtained by projection to 
the $d+1$ MUB are identical, up to permutations of the components. MUB-balanced states were originally constructed in even prime power dimension by Wootters and Sussman~\cite{Sussman}, and Amburg et al~\cite{ASSW} recently gave a construction for odd prime power dimensions equal to 3 modulo 4.
Amburg et al took notice of some interesting properties possessed by 
these states, and expressed their surprise 
that such states exist at all. Indeed, according to these authors, MUB-balanced 
states are interesting because they ``have no right to exist''. The same can 
be said for the MUB themselves, and also for the SICs that we will mention below. 

The existence of MUB cycling g-unitaries does not settle 
the existence of MUB-balanced states, since---unlike unitaries---such operators 
need not have any eigenvectors at all. We will establish that the ones we are 
considering do. At this point we fully expected that we would obtain a large 
supply of MUB-balanced states, but further analysis shows that a vector 
invariant under a MUB cycling g-unitary is in fact a MUB-balanced 
state if and only if it is also left invariant by a MUB-cycling anti-unitary. 
The latter exist only in prime power dimensions $d = 3$ mod 4. In these 
dimensions we are able to prove that the MUB-balanced states arising from MUB 
cyclers belong to a single orbit under the Clifford group. The obstruction that 
arises when $d = 1$ mod 4 has to do with the fact that g-unitaries do not 
preserve Hilbert space norm in general \cite{AYAZ}.

The states constructed by Amburg et al.~\cite{ASSW} are representatives of the 
single orbit that we found. We conjecture 
that these authors have in fact found all MUB-balanced states up to the 
action of the Clifford group. This highlights the remarkable 
position of these states within quantum state space. 

MUB-balanced states belong to the wider class of Minimum Uncertainty States 
(MUS) \cite{Sussman}. Geometrically a state is a MUS if the probability vectors 
arising from projections to the MUB have equal length. Simple parameter counting 
suggests that the set of MUS in a Hilbert space of complex dimension $d$ has 
$d-2$ real dimensions. While not very exceptional in themselves, the set of MUS 
does include all the MUB-balanced states, and surprisingly---provided the 
dimension is a prime numer---they also include all Weyl-Heisenberg covariant 
SICs \cite{ADF}. Following these observations Wootters and Sussman and (independently 
of them) one of the present authors (DMA) obtained some results concerning 
MUB-balanced states in odd prime power dimensions.  However these results were 
not published in a journal at the time (though some of them did appear in 
ref. \cite{Sussmanthesis}).  There was then a lapse of six years after which, quite independently, both the present authors and Wootters and Sussman \cite{ASSW} (together with Amburg and Sharma) decided to return to the problem.  The approach taken in ref. 
\cite{ASSW} is very different from ours, and we believe that our approach provides 
a useful complement. 

We have saved a major motivation for this work for last. The whole discussion 
will rest on the fact that, in the natural basis singled out by the group, 
all the entries of the MUB vectors belong to the cyclotomic field generated 
by the roots of unity. The cyclotomic field is an abelian extension of the rational 
field $\fd{Q}$. This may seem like an overly sophisticated way of expressing the 
fact that the entries are roots of unity, but it should be 
remembered that there is every reason to believe that the entries of the vectors 
forming what is known as a SIC-POVM belong to an abelian extension of the real 
quadratic field $\fd{Q}(\sqrt{(d-3)(d+1)})$ \cite{AYAZ}. The latter is in itself 
an extension over the rationals, but its abelian extensions are of a rather mysterious 
kind. In fact they form the subject of Hilbert's 12th problem \cite{Schappacher}. 
The SICs themselves are very distinguished orbits of the Weyl-Heisenberg 
group, and they are prime examples of sets of states that ``have no right to exist''. 
Still, it seems that they do, at least in  low dimensions.  Besides their own intrinsic interest they are  important technically, due to their applications to quantum tomography, quantum cryptography and entanglement detection (see, for example, refs.~\cite{ScottB,Englert,Chen}), and also conceptually, due to their role in the qbist program (see, for example, ref.~\cite{FuchsSchack}).  So their properties and significance deserve to be better
understood. At the moment, their existence in arbitrary dimensions remains a tantalizing 
conjecture only \cite{Zauner, Renes, DMA, Scott}. The notion of g-unitaries arose in 
an attempt to understand them better \cite{AYAZ}, and from this point of view we 
are investigating a toy model for SICs.

\section{Choice of dimensions and organization of our paper}
The story that we have to tell depends sensitively on the dimension $d$ of 
Hilbert space, and becomes especially transparent if $d$ is an odd prime 
number $p$. The case when $d = p^n$ is a power of an odd prime number is partly a 
straightforward generalization but does require rather more in the way of 
notation, and moreover the argument diverges from the $d = p$ case at some key 
points. 

Section 3 introduces the minimal amount of background concerning fields and 
Galois extensions that we will need in this paper. We then introduce the Clifford 
group and its extension to g-unitaries in sections 4 and 5. In order to increase 
readability we confine these two sections to the case when the dimension is an 
odd prime number, and discuss the general case $d = p^n$ (with $p$ odd) in 
section 6. In section 7 we introduce the 
complementarity polytope, and in section 8 we give a geometric interpretation of 
the g-unitaries in terms of its symmetry goup. In section 9 we prove a 
result concerning MUB-cyclers. There is a significant difference depending on 
whether the exponent $n$, where $d = p^n$, is odd or even. In section 10 we 
prove that g-unitary MUB-cyclers do have eigenvectors, and in section 11 we 
establish that these eigenvectors are MUB-balanced states when $d = 3$ mod $4$. 
Since the story we tell becomes involved and (we are afraid) makes some demands 
on the reader's time, we have tried to summarize our main results in words in 
section 12. The results are in fact simple and (we think) appealing.

\section{Fields and field extensions}\label{secfield}
This review section is intended to be a brief introduction to fields, field extensions, Galois automorphisms, and finite fields, for readers with little or no background in field theory. The role of finite fields in the theory of Mutually Unbiased Bases 
was made clear early on \cite{Fields}, but here we will need an infinite number field as well. Since the main purpose is to help our readers quickly grasp the key concepts used in the paper, we avoid unnecessarily technical definitions and derivations as much as we can. Rigorous treatments of the subjects can be found in textbooks on fields 
and Galois theory \cite{Stewart, Milne, Roman}.

\underline{Fields}. A field $\fd{F}$ is a set with two commutative operations, addition and multiplication, that are compatible via distributivity. Furthermore, $\fd{F}$ has an additive identity 0, a multiplicative identity 1, and every element in $\fd{F}$ has an additive inverse and, with the exception of 0, a multiplicative inverse. Thus $\fd{F}$ 
is a group under addition, and $\fd{F}$ with the 0 element removed is a group under 
multiplication. 
Common examples include the field of rational numbers $\fd{Q}$ and the field of real numbers $\fd{R}$, with the usual addition and multiplication. There also exist finite fields, i.e. fields with a finite number of elements. For example, if $p$ is a prime number, then the set of integers $\fd{Z}_p=\{0,1,..,p-1\}$ with addition and multiplication modulo $p$ forms a field, called a prime field.

\underline{Field extensions}. The complex field $\fd{C}$ commonly used in quantum physics is constructed from the real field $\fd{R}$ by adding to it an imaginary number $i$ defined by the property $i^2 = -1$, in other words $i$ is defined to be a root of the real polynomial $x^2+1$. The complex field is then defined as the set of all numbers of the form

\begin{equation} \fd{C} = \{a + i b: a,b \in \fd{R}\}
\end{equation}

\noindent where the sum and product of any two elements $(a_1+ib_1)$ and $(a_2+ib_2)$, using the identity $i^2+1=0$, can be easily worked out to be $(a_1+a_2) + i(b_1+b_2)$ and $(a_1 a_2 - b_1 b_2) + i(a_1 b_2 + a_2 b_1)$, which are clearly also in $\fd{C}$. One can think of a complex number $(a+i b)$ as a 2-component vector $(a,b)$ in a real vector space. The dimensionality of this vector space is equal to the degree of the polynomial defining $i$, namely 2.

More generally, given a field $\fd{F}$ and a number $h \notin \fd{F}$, we can construct a field $\fd{E}$ such that it is the smallest field containing both $h$ and $\fd{F}$, denoted as $\fd{E} \equiv \fd{F}(h)$. $\fd{F}$ is called the ground field, $\fd{E}$ is called the extended field, and the field extension is denoted as $\fd{E}/\fd{F}$ (reads as $\fd{E}$ over $\fd{F}$). We assume that $h$ is algebraic over 
$\fd{F}$, i.e. it is a root of some polynomial with coefficients in $\fd{F}$. Among polynomials that admits $h$ as a root, consider one with the lowest degree, and let $n$ denote its degree. The extended field $\fd{E}$ can then be constructed as

\begin{equation}\label{defextension}
\fd{E} =\{f_0 + f_1 h + ...+f_{n-1} h^{n-1}: f_i \in \fd{F}\} \ .
\end{equation}

\noindent $\fd{E}$ can be thought of as an $n$-dimensional vector space over 
$\fd{F}$, and $n$ is called the degree of the field extension $\fd{E}/\fd{F}$.

\underline{Galois automorphisms}. Given a field extension $\fd{E}/\fd{F}$ ($\fd{E}$ is an extension of $\fd{F}$), a Galois automorphism of $\fd{E}/\fd{F}$ is defined as an automorphism of $\fd{E}$ that fixes the elements in $\fd{F}$. In other words, it is a bijective mapping $g: \fd{E} \to \fd{E}$ that satisfies:

\begin{enumerate}
\item $g(e_1 + e_2) = g(e_1) + g(e_2)$ for all  $e_1,e_2 \in \fd{E}$
\item $g(e_1  e_2) = g(e_1)  g(e_2)$ for all  $e_1,e_2 \in \fd{E}$
\item $g(f) = f$ for all  $f \in \fd{F}$
\end{enumerate}
All such automorphisms form a group called the Galois group of the extension $\fd{E}/\fd{F}$, denoted by Gal$(\fd{E}/\fd{F})$. The order of the Galois group is less than or equal to the degree of the extension, and when they are equal we call such a field extension a Galois extension. All the field extensions considered in this paper are Galois extensions.

Let us go back to the example of the extension of the real field $\fd{R}$ to the complex field $\fd{C}$. If $g: \fd{C} \to \fd{C}$ is a Galois automorphism of the extension $\fd{C}/\fd{R}$, then it satisfies

\begin{equation}
g(i)g(i) = g(i^2) = g(-1) = -1 \ ,
\end{equation}

\noindent which implies either $g(i)=i$ or $g(i)=-i$. Note that the value of $g(i)$ completely specifies the Galois automorphism $g$ because its action on any complex number $(a+ib)$ can be expressed as $g(a+ib) = g(a)+g(i)g(b) = a+g(i)b$. In this case the two Galois automorphisms in the Galois group corresponding to $g(i) = i$ and $g(i) = -i$ are the identity mapping and complex conjugation. This extension is a Galois extension, as the group has order 2, equal to the degree of the extension.

\underline{Cyclotomic fields}. A cyclotomic field is generated by extending the rational field $\fd{Q}$ with an $N$-th root of unity $\omega=e^{2 \pi i/N}$. Although cyclotomic fields can be defined for any $N$, we will restrict ourselves to the case when $N=p$ is a prime number. Then the minimal polynomial of $\omega$ over $\fd{Q}$ is

\begin{equation}
P(x) = 1 + x + ... + x^{p-1} \ .
\end{equation}

\noindent The cyclotomic field $\fd{Q}(\omega)$ is then defined by 

\begin{equation}
\fd{Q}(\omega) = \{q_0 + q_1 \omega + ... + q_{p-2}  \omega^{p-2}: q_i \in \fd{Q}\} \ .
\end{equation}

\noindent The extension $\fd{Q}(\omega)/\fd{Q}$ is of degree $p-1$. Let $g: \fd{Q}(\omega) \to \fd{Q}(\omega)$ be a Galois automorphism of the field extension. Just like in the complex case, $g$ is completely specified by the value of $g(\omega)$. The identity

\begin{equation}
(g(\omega))^p = g(\omega^p) = g(1) = 1 \ ,
\end{equation}

\noindent implies that $g(\omega)= \omega^k$ for some $1\le k \le p-1$. If we specifically denote $g_k$ to be the Galois automorphism that maps $\omega$ into $\omega^k$, then $\{g_k\}_{k=1}^{p-1}$ is the order $p-1$ Galois group of this field extension. One can see that $g_1$ is the identity mapping, and $g_{p-1}$ is complex conjugation.

\underline{Finite fields}. A finite field (somewhat confusingly, also called a Galois field) is a field that has a finite number of elements, called its order. The prime field $\fd{Z}_p$ for any prime number $p$ is an example of a finite field, as previously mentioned. There are also finite fields of other orders. However, it is known that finite fields only exist for which the order is a prime power $p^n$, and for every prime power there exists a unique (up to isomorphism) field of this order. Thus, we can refer to a finite field only by its order, and we shall denote the finite field of order $N$ by $\fd{F}_N$, where $N$ must be a prime power for $\fd{F}_N$ to exist. We will not provide the proof here, but will instead give a concrete example of how to generate $\fd{F}_{p^n}$ by extending the prime field $\fd{F}_p$.

Consider the finite field $\fd{F}_3 = \{0,1,2\}$. If we take the square (modulo $3$) of each element in $\fd{F}_3$ we will get the set $\{0,1\}$, which, if we exclude $0$, is called the set of quadratic residues, i.e. the set of non-zero elements that can be expressed as a square of some element in the field. $2$ is a quadratic non-residue because no element in $\fd{F}_3$ squares to 2, therefore the polynomial $x^2+1 \equiv x^2-2 $ (mod $3$) has no root in $\fd{F}_3$ and is irreducible over $\fd{F}_3$. If we define $\lambda$ to be the root of $x^2+1$, a finite field version of the imaginary number $i$, then we can adjoin $\lambda$ to $\fd{F}_3$ to create the extended field

\begin{equation}
\fd{F}_3(\lambda)= \{a + \lambda b: a,b \in \fd{F}_3\} \ . 
\end{equation}

\noindent One can easily see that $\fd{F}_3(\lambda)\equiv \fd{F}_9$ has 9 elements 
and that its multiplication table can be calculated using the identity $\lambda^2+1=0$. In general, if we start from an irreducible polynomial of degree $r$ in $\fd{F}_p$, we will be able to extend the field to $\fd{F}_{p^r}$.

There are a few basic properties of finite fields relevant to our paper that we would 
like to mention here. First of all every finite field admits one (and 
in fact several) primitive element $\theta$, that is an element such that every non-zero 
element $x$ in the field can be written as $x = \theta^r$ for some choice of the exponent 
$r$. Then, for $\fd{F}_p$ and all extension fields $\fd{F}_{p^n}$, the following facts hold:

\begin{enumerate}
\item $a^{p^n}=a  \hskip 7mm \forall a \in \fd{F}_{p^n}$
\item $(a+b)^p = a^p + b^p  \hskip 7mm \forall a,b \in \fd{F}_{p^n}$
\item $\forall a \in \fd{F}_{p^n}$, $a^p=a$ if and only if $a \in \fd{F}_p$.
\end{enumerate}

\section{The Clifford group and Mutually Unbiased Bases}
This section is again a review of things that are fully described elsewhere 
\cite{DMA, Appleby}. 
We begin in odd prime dimensions $d$, where the Weyl-Heisenberg group has an 
essentially unique representation given 
by a primitive root of unity and by Sylvester's clock and shift matrices 

\begin{equation} \omega = e^{\frac{2\pi i}{d}} \ , \hspace{5mm} 
Z|x\rangle = \omega|x\rangle \ , \hspace{5mm} X|x\rangle = |x+1\rangle \ . 
\end{equation} 

\noindent The labels on the states are integers modulo the dimension $d$, and 
in this and in the following section $d$ is taken to be an odd prime. The group elements 
are best organized into the displacement operators 

\begin{equation} D_{\bf u} = \omega^{\frac{u_1u_2}{2}}X^{u_1}Z^{u_2} \ , \end{equation}

\noindent where ${\bf u}$ is a two component vector with entries $u_1$, $u_2$ chosen 
from the integers modulo $d$. The latter form the finite field $\fd{Z}_d$, and 
$1/2$ denotes the multiplicative inverse of 2 in this field. The reason for inserting 
the phase factors into the definition of 
the displacement operators is that the group law then takes the useful form 

\begin{equation} D_{\bf u}D_{\bf v} = \omega^{\Omega ({\bf u}, {\bf v})} D_{{\bf u} + 
{\bf v}} \ , \hspace{6mm} \Omega ({\bf u}, {\bf v}) = u_2v_1-u_1v_2 \ . 
\label{grouplaw} \end{equation}

\noindent Here $\Omega$ is a symplectic form on the vector space $(\fd{Z}_d)^2$.  

The Clifford group is defined as the normalizer of the Weyl-Heisenberg group within 
the unitary group. Modulo phases it is isomorphic to a semi-direct product of the 
special linear group $\SL(2,\fd{Z}_d)$ (which is isomorphic to the symplectic group) 
acting on the discrete translation group $(Z_d)^2$. 
This action is 

\begin{equation} U_GD_{\bf u}U^{-1}_G = D_{G{\bf u}} \ . \end{equation}

\noindent It is then obvious from the group law 
(\ref{grouplaw}) that the matrices $G$ have to preserve the symplectic form---i.e. 
they must have determinant unity, and belong to $\SL(2, \fd{Z}_d )$. Their unitary representation $U_G$ is fixed up to overall 
phase factors. In this paper we will need the latter too, so we use the metaplectic 
representation which is faithful as opposed to only projective. It is given by 
\cite{Appleby}

\begin{equation}  G = \left( \begin{array}{cc} \alpha & \beta \\ \gamma & \delta 
\end{array} \right) \hspace{3mm} \rightarrow \hspace{3mm} 
\left\{ \begin{array}{lll}
U_G = \frac{e^{i\theta}}{\sqrt{d}}\sum_{r,s} 
\omega^{\frac{1}{2\beta}(\delta r^2 - 2rs + \alpha s^2)} 
|r\rangle \langle s| & \ & \beta \neq 0 \\
\\
U_G = l(\alpha ) \sum_s\omega^{\frac{\alpha \gamma}{2}s^2}|\alpha s\rangle \langle s| & \ & 
\beta = 0 \end{array} \right. 
\label{1} \end{equation}

\noindent where $\alpha \delta - \beta \gamma = 1$ modulo $d$ and 

\begin{equation} e^{i\theta} = - \frac{1}{i^{\frac{d+3}{2}}} l(-\beta) = \left\{ 
\begin{array}{lll} (-1)^kl(-\beta ) & \mbox{if} & d = 4k+1 \\ \\ (-1)^{k+1}il(-\beta ) & 
\mbox{if} & d = 4k+3 \end{array} \right. \label{2} \end{equation}

\begin{equation} l(x) = \left\{ \begin{array}{rll} 1 & \mbox{if} & x \in {\bf Q} \\ 
\\ -1 & \mbox{if} & x \in {\bf N} \ . \end{array} \right. \end{equation}

\noindent Number theorists know $l(x)$ as the Legendre symbol. ${\bf Q}$ is the 
set of quadratic residues, that is to say the set of non-zero integers modulo $d$ 
that can be written as the square of another integer modulo $d$, and ${\bf N}$ is 
the set of quadratic non-residues. These two sets have the same size. 

With the Clifford group in hand (and $d$ still set firmly equal to an odd prime) 
we can obtain a complete set of $d+1$ MUB in Hilbert space. We simply choose 
a vector belonging to the computational basis, and act on it with the Clifford 
group. The resulting orbit consists of $d(d+1)$ unit vectors forming $d+1$ 
MUB. We denote these vectors by 

\begin{equation} |e_r^{(z)}\rangle \ , \hspace{5mm} 
z \in \{ 0,1,\dots , d-1, \infty \} \ , \hspace{5mm} 
r \in \{ 0 , 1, \dots , d-1\}\ . \label{MUvectors} \end{equation}

\noindent where $z$ labels the bases and $r$ labels the vectors in a basis. Thus 

\begin{equation} |\langle e^{(z)}_r|e^{(z')}_{r'}\rangle |^2 = \frac{1}{d} + 
\delta_{z,z'}\left( \delta_{r,r'} - \frac{1}{d}\right) \ . \end{equation}

\noindent This amazing fact is by now sufficiently well known \cite{Ivanovic, 
Fields, Vatan}, so let us just 
mention that each individual basis is an eigenbasis of a cyclic subgroup of 
the Weyl-Heisenberg group. In general the Weyl-Heisenberg group permutes the 
vectors within the bases, while a symplectic unitary transforms the bases among 
themselves. To be precise, given a symplectic matrix of the form given in 
eq. (\ref{1}) the 
bases are permuted by the M\"obius transformation \cite{Subhash}

\begin{equation}z \rightarrow z' = \frac{\alpha z + \beta}{\gamma z + \delta} 
\ . \label{Mobius} \end{equation}

\noindent This can be interpreted as a projective transformation of a finite 
projective line whose points are labelled by $z$, and it explains 
why one of the bases was labelled by $\infty $. In particular 

\be
z' =
\begin{cases}
\infty \qquad & \text{if $z\neq \infty$, $\gamma z + \delta = 0$ or $z=\infty$, $\gamma=0$} 
\\
\frac{\alpha}{\gamma} \qquad & \text{if $z=\infty$, $\gamma \neq 0$ \ .}
\end{cases} 
\ee

\noindent The individual vectors are also permuted by the Clifford group, up to 
phase factors that belong to the cyclotomic field. 

There is an oddity here, since the Clifford group does not give us the most general M\"obius transformation. The latter is of the form (\ref{Mobius}), with 
$\alpha \delta - \beta \gamma \neq 0$ but otherwise unrestricted. Such transformations 
are obtained from the general linear group $\GL(2, \fd{Z}_d)$, consisting of all invertible 
two by two matrices with entries in $\fd{Z}_d$, by taking the quotient with all 
diagonal matrices. They form 
the projective group $\PGL(2, \fd{Z}_d)$. Similarly, the special linear group 
$\SL(2, \fd{Z}_d)$ gives rise to the projective group $\PSL(2, \fd{Z}_d)$, which 
is a proper subgroup of $\PGL$. In fact the respective orders of these groups are 

\begin{equation} |\GL | = (d-1)d(d^2-1) \ , \hspace{5mm} |\PGL| = d(d^2-1) , \end{equation}

\begin{equation} |\SL | = d(d^2-1) \ , \hspace{5mm} |\PSL| = \frac{d(d^2-1)}{2} \ . 
\end{equation}

\noindent Those elements in $\GL$ that do not belong to $\SL$ are easily identified. 
We observe that 

\begin{equation} \left( \begin{array}{cc} 1 & 0 \\ 0 & x^2 \end{array} \right) = 
\left( \begin{array}{cc} x & 0 \\ 0 & x \end{array} \right) 
\left( \begin{array}{cc} x^{-1} & 0 \\ 0 & x \end{array} \right) \ . \end{equation}

\noindent Elements of $\GL$ for which the determinant is a quadratic residue do 
not add anything to $\PGL$ beyond the contribution of $\SL$. But elements whose 
determinant is not a quadratic residue do.

For a prime $d$, it is a number theoretical fact that $-1$ is a quadratic residue modulo $d$ if 
and only if $d = 1$ modulo 4. Matrices $G$ having determinant $-1$ can be represented 
as acting on Hilbert space through anti-unitary transformations \cite{DMA}, and if $d =3$ 
modulo 4 the Clifford group extended to include such elements will yield general 
projective permutations of the bases \cite{Appleby}. But if 
$d = 1$ modulo 4 every other projective permutation is missing. It is at 
this point that g-unitaries enter the story.

\section{The Clifford group extended by ${\rm g}$-unitaries}
When we perform general linear transformations $G$ in the discrete phase 
space symplectic areas will change according to $\Omega ({\bf u}, {\bf v}) 
\rightarrow \Delta \Omega ({\bf u}, {\bf v})$, where $\Delta$ is the determinant of the 
matrix $G$. To stay consistent with the group law (\ref{grouplaw}) such a 
transformation must be accompanied with the Galois automorphism $\omega \rightarrow 
\omega^\Delta$. If not it will not be an automorphism of the Weyl-Heisenberg group, 
as we insist it should be. The question is whether such transformations are 
allowed. 

Consider first the case $\det{G} = \Delta = -1$. In this case we are simply dealing 
with complex conjugation, and more generally with anti-unitary transformations 
of Hilbert space. This is certainly allowed, and takes us to the extended 
Clifford group, which is well understood \cite{DMA}. As we noted in the previous 
section it will give rise to general projective permutations of the bases if 
$d = 3$ modulo 4.

Now consider the case of general $\GL(2, \fd{Z}_p)$ matrices. Clearly 

\begin{equation} \det{G} = \Delta \hspace{5mm} \Rightarrow \hspace{5mm} G = 
\bar{G}K_\Delta \ , \hspace{6mm} \det{\bar{G}} = 1 \ , \hspace{5mm} 
K_\Delta = \left( \begin{array}{cc} 1 & 0 \\ 0 & \Delta 
\end{array} \right) \ . \end{equation}

\noindent So it will be enough to have a representation of the matrix $K_\Delta$. 
We decide that 

\begin{equation} K_x = \left( \begin{array}{cc} 1 & 0 \\ 0 & x 
\end{array} \right) \label{Kx} \end{equation}

\noindent is represented by the automorphism $g_x$,  

\begin{equation} g_x \ : 
\hspace{5mm} \omega \rightarrow \omega^x \ .
\end{equation} 

\noindent This means that 

\begin{equation} U_G|\psi\rangle = U_{\bar{G}}g_\Delta(|\psi\rangle ) \ . \end{equation}

\noindent  The notation $g_x(|\psi\rangle )$ will always mean the vector obtained 
by applying the automorphism $g_x$ to the components of $|\psi\rangle $ in the standard 
basis. The notation $g_x(A)$ is defined similarly  using the matrix elements of the 
operator $A$. This action is defined only on matrices and vectors 
whose entries belong to the cyclotomic field. 

Going back to the explicit expressions in eqs. (\ref{1}-\ref{2}) we see that there 
is a question whether the overall factor $e^{i\theta}/\sqrt{d}$ is in the cyclotomic 
field. To answer it we recall the Gaussian sum 

\begin{equation} \sum_{x = 0}^{p-1}\omega^{x^2} = \left\{ \begin{array}{lll} 
\sqrt{p} & \mbox{if} & p = 4k+1 \\ \\ i\sqrt{p} & \mbox{if} & p = 4k+3 \ . \end{array} 
\right. \end{equation}

\noindent We can rewrite this as  

\begin{equation} \sum_{x \in {\bf Q}}\omega^x - \sum_{x \in {\bf N}}\omega^x = 
\left\{ \begin{array}{lll} \sqrt{p} & \mbox{if} & p = 4k+1 \\ 
\\ i\sqrt{p} & \mbox{if} & p = 4k+3 \ . \end{array} \right. \label{rotp} \end{equation}

\noindent From this we reach the conclusion that the entries in the unitary matrices 
representing the symplectic group do indeed belong to the cyclotomic field, and then 
this is obviously true for the entire Clifford group. This means that we can use 
the Galois automorphisms of the cyclotomic field to represent the group 
$GL(2, \fd{Z}_d)$. 

We must show that the representation is faithful. For this purpose write 
$G = \bar{G}K$, $\det{\bar{G}} = 1$, and consider

\begin{equation} G_1G_2 = \bar{G}_1K_1\bar{G}_2K_2 = 
\bar{G}_1(K_1\bar{G}_2K_1^{-1})K_1K_2 \ , \end{equation}

\begin{equation} \bar{G}_2 = \left( \begin{array}{cc} \alpha & \beta \\ \gamma & \delta 
\end{array} \right) \hspace{5mm} \Rightarrow \hspace{5mm} K_1\bar{G}_2K_1^{-1} = 
\left( \begin{array}{cc} \alpha & \beta x_1^{-1} \\ \gamma x_1 & \delta 
\end{array} \right) \ . \end{equation}

\noindent On the other hand 

\begin{equation} 
U_{G_1}U_{G_2} =  U_{\bar{G}_1} g_1 (U_{\bar{G}_2} g_2) =  U_{\bar{G}_1}g_1(U_{\bar{G}_2})g_1g_2 \ . 
\end{equation}

\noindent Now it follows from eq. (\ref{rotp}) that $g_1(e^{i\theta}\sqrt{p}) 
= e^{i\theta}\sqrt{p}$ if $x_1$ is a quadratic residue, and $g_1(e^{i\theta}\sqrt{p}) 
= -e^{i\theta}\sqrt{p}$ if it is not. In the latter case $(-\beta x_1^{-1}|p) = 
- (-\beta |p)$. It then follows by inspection of eqs. (\ref{1}) that 

\begin{equation} g_1(U_{\bar{G}_2}) = U_{K_1\bar{G}_2K_1^{-1}} \ . \end{equation}

\noindent Given that the representation of $\SL$ is faithful \cite{Neuhauser, Appleby}, 
the representation of $\GL$ is faithful too.

The sense in which the operators we are dealing with are g-unitary, as opposed 
to merely g-linear, was spelt out in ref. \cite{AYAZ}. Suppose $G = \bar{G}K_{\Delta}$, where 
det$\bar{G} = 1$. The adjoint of $U_G$ is then defined by 

\be
\la U_G \psi, \phi\ra = g_{\Delta}^{-1} \bigl( \la \psi, U^{\dagger}_G\phi \ra\bigr)
\ . \ee

\noindent An explicit expression for the adjoint is 

\be
U^{\dagger}_G = g_{\Delta}^{-1} (U_{\bar{G}}^{\dagger}) g_{\Delta}^{-1} \ . 
\ee

\noindent This expression is readily seen to imply that 
$U^{\dagger}_G U_G=U_G U^{\dagger}_G = {\bf 1}$. 

The action of the group is restricted to those vectors in Hilbert space 
whose components belong to the cyclotomic field $\fd{Q}(\omega)$. While this is a severe 
restriction, it does include the vectors in the MUB, since they form an 
orbit under the Clifford group. It also includes every vector that can be 
reached from the computational basis by using a larger set of transformations 
known as the Clifford hierarchy. This set is large enough for the purposes 
of universal quantum computation \cite{Gottesman}. 

The g-unitaries will preserve the Hilbert space norm only if this norm is 
rational, which it may well not be. This means that their action is wildly 
discontinuous in general. 
Thus, consider the transformation $\omega \rightarrow \omega^2$ for $d = 5$, 
and its action on the vector 

\begin{equation} \left( \begin{array}{c} \omega^2 + \omega^3 \\ 1 \\ 1 \\ 1 \\ 1 
\end{array} \right) \hspace{5mm} \rightarrow \hspace{5mm} 
\left( \begin{array}{c} \omega^4 + \omega \\ 1 \\ 1 \\ 1 \\ 1 
\end{array} \right) \ . \end{equation}

\noindent These vectors are real, and can be approximated by rational vectors 
which are left invariant by the g-unitary, while the vector which is approximated 
is moved a long distance in Hilbert space. This behaviour should be kept in mind. 

A g-unitary operator does preserve a norm which is obtained by multiplying the 
scalar product of two vectors with all its $d-2$ Galois conjugates \cite{Janusz}. 
However, if 
this norm has a physical meaning it is hidden from us.

\section{The g-extended Clifford group in the prime power case}

We hope that the idea of g-unitaries is by now clear, and turn to the complications 
that occur in prime power dimensions. We will assume the material in section 3. 

When $d = p^n$ the elements of the Heisenberg group can be labelled by elements 
of the finite field $\fd{F}_d$ of order $p^n$. Thus we write \cite{Appleby}  

\begin{equation} \omega = e^{\frac{2\pi i}{d}} \ , \hspace{5mm} 
Z_u|x\rangle = \omega^{\tr(xu)}|x\rangle \ , \hspace{5mm} 
X_u|x\rangle = |x+u\rangle \ , 
\end{equation} 

\noindent with $x, u \in \fd{F}_d$ and the field theoretic trace $\tr u$ of an element 
of the finite field lies in the ground field 
$\fd{F}_p$. Although this is not immediately obvious the resulting group is isomorphic 
to the direct product of $n$ copies of the Weyl-Heisenberg group in dimension $p$ 
\cite{Vourdas, Appleby}. The displacement operators are 

\be
D_{\mbf{u}} |x\ra = \omega^{\frac{1}{2} \tr(u_1u_2) +\tr(u_2x)}|x+u_1\ra ,
\ee

\noindent where the field theoretic trace appears again. Complete sets of MUB 
again exist \cite{Fields}. They arise as eigenbases of maximally abelian subgroups 
with only the unit element in common just as they do in prime dimensions \cite{Vatan}, 
but now there are many options for how to do this. 

Two different Clifford groups can be defined \cite{Gross}. The one 
we are interested in here is a subgroup of the other, and it has been 
called the restricted Clifford group \cite{Appleby}. 
It leaves a given complete set of MUB invariant, and includes symplectic unitaries 
representing the $\SL(2,\mathbb{F}_d)$ matrices

\be
G = \bmt \alpha & \beta \\ \gamma & \delta \emt \ . 
\ee

\noindent A faithful representation is given by \cite{Appleby}

\be
U_G 
=
\begin{cases}
l(\alpha) \sum_{x\in\mathbb{F}_d} \omega^{\tr(\frac{\alpha \gamma}{2} x^2)} 
|\alpha x\rangle \langle x | \qquad & \text{if $\beta = 0$}
\\ \vphantom{\Biggl(}
\frac{\tilde{l}(-\beta)}{\sqrt{d}} \sum_{x,y\in \mathbb{F}_d} 
\omega^{\tr\left( \frac{1}{2\beta} (\alpha y^2 - 2 x y + \delta x^2)\right)} 
|x\rangle \langle y| \qquad & \text{if $\beta \neq 0$} \end{cases}
\label{eq:symUDef}
\ee

\noindent where 

\be
\tilde{l}(x) = \begin{cases} -i^{-\frac{n(p+3)}{2}} l(x) \qquad & x\neq 0 \\ 1 \qquad & x=0 \end{cases}  
\ee

\noindent and $l(x)$ is the quadratic character of $\fd{F}_d$, equal to $+1$ if 
$x$ can be written as a square, and equal to $-1$ otherwise. 

By adjoining the matrix $K_x$, given in eq. (\ref{Kx}), we can extend the representation 
to include the group $\GL_p(2, \fd{F}_d)$ consisting of two-by-two matrices whose 
determinants are non-zero and are in the ground field $\fd{F}_p$. For any $x \in \fd{F}_d$ the 
matrix $K_x$ is represented in the same way as in section 5. Hence the case of prime 
power dimension differs from the prime dimensional case in that the g-extended Clifford 
group includes only a proper subgroup of $\GL$.

We know that the representation of $\SL$ is faithful \cite{Neuhauser, Appleby}, and 
given that we can prove that the representation of $\GL$ is faithful too:

\begin{theorem}
The group $\GL_p(2,\fd{F}_d)$ is faithfully represented by eqs. 
(\ref{eq:symUDef}) provided that $d = p^n$ and $n$ is odd.
\end{theorem}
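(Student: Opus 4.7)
The plan is to reduce the problem to two steps: checking that $G \mapsto U_G$ is a homomorphism on $\GL_p(2,\fd{F}_d)$, and then checking that its kernel is trivial. Every $G \in \GL_p$ decomposes uniquely as $G = \bar{G}K_\Delta$ with $\bar{G} \in \SL(2,\fd{F}_d)$ and $\Delta = \det G \in \fd{F}_p^\times$ (the defining condition of $\GL_p$ is precisely that the determinant lies in the ground field), so the representation reads $U_G = U_{\bar{G}}\,g_\Delta$, exactly as in section 5. Multiplying $U_{G_1}U_{G_2}$ and using that $g_{x_1}$ commutes past a linear operator $A$ as $g_{x_1}A = g_{x_1}(A)\,g_{x_1}$, one sees that the homomorphism property collapses to a single conjugation identity on the symplectic piece.

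That identity is
\begin{equation*}
g_{x_1}(U_{\bar{G}_2}) = U_{K_{x_1}\bar{G}_2 K_{x_1}^{-1}}\ .
\end{equation*}
Applying $g_{x_1}$ entrywise to (\ref{eq:symUDef}), the phases $\omega^{\tr(\cdots)}$ transform exactly as required for the conjugation that replaces $(\alpha,\beta,\gamma,\delta)$ by $(\alpha,\beta x_1^{-1},\gamma x_1,\delta)$. The only subtle piece is the overall factor $\tilde{l}(-\beta)/\sqrt{d}$. A Gauss-sum evaluation writes $\sqrt{d}$ as an element of $\fd{Q}(\omega)$; when $n$ is odd it equals $p^{(n-1)/2}\sqrt{p^{\ast}}$ with $p^{\ast} = (-1)^{(p-1)/2}p$, and $g_{x_1}(\sqrt{p^{\ast}}) = l_p(x_1)\sqrt{p^{\ast}}$ for the Legendre symbol of $x_1 \in \fd{F}_p^\times$. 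A short exercise using the fact that $(p^n-1)/(p-1) = 1+p+\dots+p^{n-1}$ is odd iff $n$ is odd shows that the quadratic character of $\fd{F}_d$ restricted to $\fd{F}_p^\times$ coincides with $l_p$ in the same regime. These two facts conspire to turn $\tilde{l}(-\beta)/\sqrt{d}$ into $\tilde{l}(-\beta x_1^{-1})/\sqrt{d}$ with exactly the right sign. This parity bookkeeping is the main obstacle and the only place where the hypothesis $n$ odd is used.

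Faithfulness is then quick. Suppose $U_{\bar{G}}\,g_\Delta = \mathbf{1}$ as g-unitaries on $\fd{Q}(\omega)^d$. Evaluating on any rational vector $v \in \fd{Q}^d$ and using that $g_\Delta$ fixes rational entries yields $U_{\bar{G}}v = v$; since $|0\rangle,\dots,|d-1\rangle \in \fd{Q}^d$ span the Hilbert space and $U_{\bar{G}}$ is $\fd{C}$-linear, we conclude $U_{\bar{G}} = \mathbf{1}$. The already-cited faithfulness of the symplectic representation gives $\bar{G} = I$, leaving $g_\Delta = \mathbf{1}$, i.e.\ $\omega^\Delta = \omega$; the unique lift $\Delta \in \{1,\dots,p-1\}$ with this property is $\Delta = 1$, so $G = I$. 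For even $n$ the argument of the second paragraph breaks down in a double way: $\sqrt{d} = p^{n/2}$ is rational and so fixed by every $g_{x_1}$, while the quadratic character of $\fd{F}_d$ becomes trivial on $\fd{F}_p^\times$; the key identity then fails as soon as $x_1$ is a non-residue mod $p$, confirming that the hypothesis is genuinely needed rather than an artefact of the proof.
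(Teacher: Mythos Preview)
Your proof is correct and follows essentially the same route as the paper's: decompose $G=\bar{G}K_\Delta$, reduce the homomorphism property to the conjugation identity $g_{\Delta_1}(U_{\bar{G}_2})=U_{K_{\Delta_1}\bar{G}_2K_{\Delta_1}^{-1}}$, and verify this by tracking the effect of $g_{\Delta_1}$ on the scalar $\tilde{l}(-\beta)/\sqrt{d}$, with the oddness of $n$ entering through the parity of $1+p+\dots+p^{n-1}$. The only differences are cosmetic: where the paper argues abstractly that $g_{\theta_p}$ must flip the sign of $\tilde{l}(-\beta)/\sqrt{d}$ (because its square is rational while it is not), you compute the Galois action explicitly via the Gauss sum and the Legendre symbol; and you add a separate kernel argument for injectivity, which the paper leaves implicit in its use of the word ``faithful''.
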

\begin{proof} 
First recall the basic fact that 
$\fd{F}_p = \{x\in \fd{F}_d \colon x^p = x\}$. Let $\theta$ be a primitive element 
of $\fd{F}_d$. Then it is not difficult to show that $\theta_p = 
\theta^{1+p+\dots + p^{n-1}}$ belongs to $\fd{F}_p$, and is in fact a primitive element 
of $\fd{F}_p$ (since this is true for all non-zero elements of $\fd{F}_p$). 
%

Let $G_1$, $G_2$ be arbitrary elements of $\GL_p(2,\fd{F}_d)$.  We write $G_{1,2} = 
\bar{G}_{1,2}K_{\Delta_{1,2}}$, where $\bar{G}_{1,2} \in \SL(2,\fd{F}_d)$. Then 

\be
U_{G_1} U_{G_2} = U_{\bar{G}_1} g_{\Delta_1}(U_{\bar{G}_2}) g_{\Delta_1} g_{\Delta_2} \ . 
\ee

\noindent We know that $\Delta_1 = \theta_p^{u}$ for some integer $u$.  It is easily seen that $g_{\Delta_1} = g_{\theta_p}^{u}$, and that $\Delta_1$ is a quadratic residue in $\fd{F}_p$  if and only if $u$ is even.  Applying eq. (\ref{eq:symUDef}) to

\be
\bar{G}_2 = \bmt \alpha & \beta \\ \gamma & \delta\emt
\ee
we have
\be
U_{\bar{G}_2} =
\begin{cases}
l(\alpha) \sum_{x\in \fd{F}_d} \omega^{\frac{1}{2} \tr(\alpha \gamma x^2)}|\alpha x\ra\la x| \qquad &\beta = 0
\\
\frac{\tilde{l}(-\beta)}{\sqrt{d}} \sum_{x,y\in \fd{F}_d} \omega^{\frac{1}{2} \tr(\beta^{-1}(\alpha y^2-2xy+\delta x^2)}|x\ra\la y |
\qquad & \beta \neq 0 \ . 
\end{cases}
\ee
Since
\ea{
\frac{\tilde{l}(-\beta)}{\sqrt{d}}& \notin \fd{Q} & 
\left(\frac{\tilde{l}(-\beta)}{\sqrt{d}}\right)^2 & \in \fd{Q}
}
and since $g_{\theta_p}$ generates the Galois group, we must have
\be
g_{\theta_p} \left(\frac{\tilde{l}(-\beta}{\sqrt{d}}\right) = - \frac{\tilde{l}(-\beta)}{\sqrt{d}} \ ,
\ee
implying
\be
g_{\Delta_1} \left(\frac{\tilde{l}(-\beta)}{\sqrt{d}}\right) =(-1)^{u} \frac{\tilde{l}(- \beta)}{\sqrt{d}} \ .
\ee
In view of  Lemma 1 in ref. \cite{Appleby}, and our assumption that $n$ is odd, we may write this in the form
\be
g_{\Delta_1} \left(\frac{\tilde{l}(-\beta)}{\sqrt{d}}\right) = \frac{\tilde{l}(- \Delta_1 \beta)}{\sqrt{d}} \ .
\ee
Therefore

\begin{equation}
g_{\Delta_1}(U_{\bar{G}_2}) =
\begin{cases}
l(\alpha) \sum_{x\in \fd{F}_d} \omega^{\frac{1}{2} \tr(\Delta_1 \alpha \gamma x^2)}|\alpha x\ra\la x| \qquad &\beta = 0
\\
\frac{\tilde{l}(-\Delta_1 \beta)}{\sqrt{d}} \sum_{x,y\in \fd{F}_d} \omega^{\frac{1}{2} \tr(\Delta_1\beta^{-1}(\alpha y^2-2xy+\delta x^2)}|x\ra\la y |
\qquad & \beta \neq 0 \ . 
\end{cases} \end{equation}

\noindent Thus 

\begin{equation} g_{\Delta_1}(U_{\bar{G}_2}) = U_{K_{\Delta_1} \bar{G}_2 K_{\Delta_1}^{-1}} 
\ . \end{equation}

\noindent In view of the fact that the representation of $\SL(2,\fd{F}_d)$ is faithful 
we can now deduce

\be
U_{G_1} U_{G_2} = U_{G_1G_2}.
\ee
implying that  the representation of $\GL_p(2,\fd{F}_d)$  is also faithful, as claimed.
\end{proof}

As an immediate consequence of this one has

\be
U^{\dagger}_G = U_{G^{-1}}
\ee

\noindent for all $G \in \GL_p (2, \fd{F}_d)$. We also remark that if $n$ is even 
then it follows from the above that 

\be
U_{G_1}U_{G_2} = \pm U_{G_1G_2}
\ee

\noindent so that the representation is, in a sense, ``close to faithful''.

It follows from Lemma 1 of ref. \cite{Appleby} that $\fd{F}_p$ contains numbers which are quadratic non-residues with respect to the embedding field $\fd{F}_d$ if and only if $n$ is odd.  Consequently, extending the Clifford group to include the full set of $g$-unitaries will give us the ``missing'' M\"{o}bius transformations discussed in Section $4$ if and only if $n$ is odd.

\section{Complementarity polytopes}
To give a geometrical interpretation to the g-unitaries we place ourselves 
in the set of Hermitean matrices of unit trace, regarded as an Euclidean space 
equipped with the standard Hilbert-Schmidt metric. The set of density matrices 
forms a convex body within this space, which we call Bloch space. Usually one 
thinks of Bloch space as a $(d^2-1)$-dimensional vector space, with its origin 
at the maximally mixed state. Anyway the distance between two 
Hermitean matrices $M_1$ and $M_2$ (be they density matrices or not) is 

\begin{equation} D(M_1,M_2) = \sqrt{\frac{1}{2} \mbox{Tr}(M_1-M_2)^2} \ . \end{equation} 

\noindent In this Bloch space we introduce a regular 
simplex with $d^2$ vertices represented by matrices $A_{ij}$, with each index ranging 
over $d$ possible values. They are chosen to obey  

\begin{equation} \mbox{Tr}A_{ij}A_{kl} = \left\{ \begin{array}{l} d \ \ 
\mbox{if} \ (i,j) = (k,l) \\ 0 \ \ \mbox{otherwise .} \end{array} \right. \end{equation}

\noindent At the end things will be arranged so that these matrices can be identified 
with Wootters' phase point operators \cite{Wootters}, but at the outset they just 
define the vertices of a regular simplex in Bloch space. The simplex is centred at 
the maximally mixed state by insisting that 

\begin{equation} \sum_{i=0}^{d-1}\sum_{j=0}^{d-1}A_{ij} = d{\bf 1} \ . 
\label{summa} \end{equation}

\noindent A facet of the simplex consists of all matrices $M$ such that Tr$A_{ij}M = 0$ 
for some fixed choice of $i,j$. 

Now suppose that $d$ is a prime or a prime power, in which case there exists a combinatorial 
structure known as a finite affine plane with $d^2$ points. One 
defines subsets of points, known as lines, such that any pair of points belong 
to a unique line and such that for every point not belonging to a line there 
is a unique line disjoint from the given line and containing the given point. 
(This is the parallel postulate in affine geometry.) The most important theorems 
concerning finite affine planes state that \cite{Bennett}

\begin{enumerate}
\item the number of lines in the finite affine plane equals $d(d+1)$,
\item each line contains $d$ points,
\item each point is contained in $d+1$ lines,
\item and there are altogether $d+1$ sets of $d$ disjoint lines.
\end{enumerate}

\noindent Therefore we can associate a unit trace operator (and hence a point in 
Euclidean space) to each given line by summing the $d$ phase point operators 
contained in the line, 

\begin{equation} P^{(z)}_r = \frac{1}{d}\sum_{{\bf p} \in l^z_r} A_{\bf p} \ . \label{sumrest} \end{equation}

\noindent The index $z \in \{ \fd{F}_d \cup \infty \} $ labels the $d+1$ pencils 
of parallel lines with the symbol $\infty$ used 
to label the ``vertical'' pencil, and the index $r \in \fd{F}_d$  
labels the individual lines within such a pencil. There are $d(d+1)$ such line operators 
altogether.

Here we have tacitly identified 
the affine plane with the vector space $(\fd{F}_d)^2$. Actually affine planes not 
coordinatized by finite fields do exist and the use of a finite field for 
labelling purposes is not mandatory, but we will stick to it. 

Due to eq. (\ref{summa}) and the fact 
that $d+1$ lines intersect in each point eq. (\ref{sumrest}) can be inverted to give 

\begin{equation} A_{\bf p} = \sum_{l^z_r \ni {\bf p}} P^{(z)}_r - {\bf 1} \ . \label{restsum} \end{equation}

\noindent The sum is over all the $d+1$ operators representing lines going 
through the point represented by $A$.  

\begin{figure}
\begin{center}
\includegraphics[width=.8\textwidth]{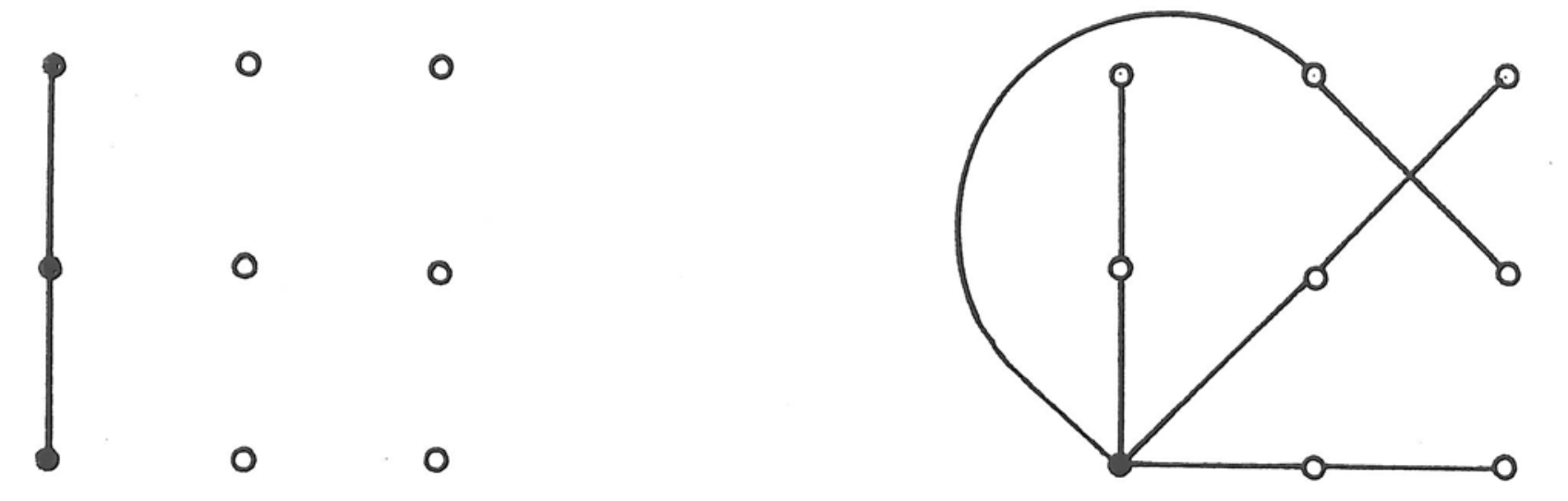}
\end{center}
\caption{\small A finite plane is formed by the operators $A_{ij}$. To the left 
we illustrate the equation $P_0^{(\infty )} = \frac{1}{3}(A_{00} 
+ A_{01} + A_{02})$. To the right we illustrate the equation $A_{00} = P^{(0)}_0 + 
P^{(1)}_0 + P^{(2)}_0 + P^{(\infty )}_0 - {\bf 1}$. Note that both the points and 
the lines in the affine plane correspond to points in Bloch space.}
\label{fig:2}
\end{figure}

Using the combinatorics of the affine plane one easily checks that 

\begin{equation} \mbox{Tr}P^{(z)}_rP^{(z')}_{r'} = \left\{ \begin{array}{lll} 1 & 
\mbox{if} \ z=z' \ \mbox{and} \ r = r' & \mbox{(identical)} \\ 
0 & \mbox{if $z = z'$ and $r \neq r'$} & \mbox{(no common point)} \\ \frac{1}{d} & \mbox{if} 
\ z \neq z' & \mbox{(one common point)} \ . \end{array} \right. \end{equation}  

\noindent The geometrical meaning of this, in Euclidean space, is that any collection 
of $d$ parallel 
(non-intersecting) lines forms a regular simplex spanning a $(d-1)$-dimensional 
plane, and that the collection of $d+1$ planes defined in this way are 
totally orthogonal. The convex hull of the $d(d+1)$ vertices defines the 
complementarity polytope \cite{Asa}. 

A complementarity polytope exists in all Euclidean spaces of dimension 
$d^2-1$. What is special about $d$ being a prime power is only that we were able 
to use the combinatorics of a finite affine plane to inscribe it in a 
regular simplex. To understand its full face structure we define generalized 
phase point operators by picking one 
matrix $P^{(z)}_r$ for each value of $z$, and summing them to obtain  

\begin{equation} A_{\vec{r}} = \sum_z
P^{(z)}_r - \frac{1}{d}{\bf 1} \ . 
\label{ppo} \end{equation}

\noindent Unlike the summation in eq. (\ref{restsum}), we are now allowing 
arbitrary choices of lines (label $r$) from the pencils (label $z$). 
The operators $A_{\vec{r}}$ have unit trace. There are $d^{d+1}$ such operators altogether labelled by a vector $\vec{r}$ with entries in $\fd{F}_d$. 
It is easy to see that 

\begin{equation} \mbox{Tr}A_{\vec{r}}P^{(z)}_r = 
\left\{ \begin{array}{ll} 1 & \mbox{if the $z$th component of} \ \vec{r} \ 
\mbox{equals} \ r \\ 
0 & \mbox{otherwise} \ . \end{array} \right. \end{equation}

\noindent Therefore the entire complementarity polytope is confined between 
pairs of parallel hyperplanes containing pairs of orthocomplemented faces. 
This includes faces that contain $d^2-1$ 
vertices spanning a facet associated to the phase point operator $A_{\vec{r}}$, 
and all facets arise in this way. The particular phase point operator simplex 
we started out with is just one out of $d^{d-1}$ such simplices in which the 
complementarity polytope is inscribed, but we will soon see why it is useful 
to single out one of them for attention.

If $d = 2$ Bloch space has $2^2-1 = 3$ dimensions and the complementarity polytope 
is a regular octahedron with $2\cdot 3$ vertices and $2^3$ facets, inscribed in 
a regular tetrahedron. However, we will see that the even and odd dimensional cases 
differ a bit from the point of view of their symmetry groups. 

\begin{figure}
\begin{center}
\includegraphics[width=.5\textwidth]{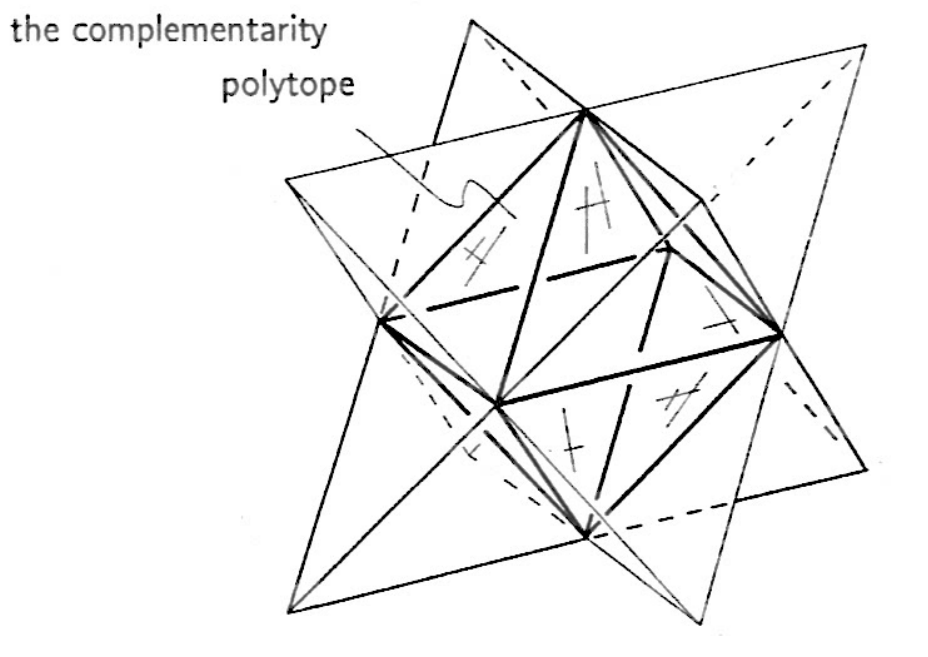}
\end{center}
\caption{\small When $d = 2$ the complementarity polytope is inscribed 
in $2^{2-1}= 2$ simplices. It is a special property of $d = 2$ 
that these are on the same footing relative to the set of quantum states 
(the round Bloch sphere). Reprinted with permission \cite{ericsson}.}
\label{fig:1}
\end{figure}

\section{The symmetry group of the complementarity polytope}
The symmetry group of the complementarity polytope is the huge group 

\begin{equation} S_{d+1}\times S_d\times S_d \times \dots \times S_d \in 
O(d^2-1) \ , \label{grupp3} \end{equation}

\noindent where $S_d$ is the group of all permutations of the vertices of a 
$(d^2-1)$-dimensional simplex, and $S_{d+1}$ is the group of permutations of the 
$(d-1)$-planes. However, we will naturally insist that its vertices correspond to 
pure quantum states, which means that the polytope must be inscribed into the convex 
body of quantum states. Given that we have a complete set of MUB available this is 
achieved by 

\begin{equation} P_r^{(z)} = \Pi^{(z)}_r = |e^{(z)}_r\rangle \langle e^{(z)}_r| \ 
. \label{anchor} \end{equation}

\noindent This is the tricky step, but we know it can be done if $d$ is a 
power of a prime number. 
In any case the symmetry group of the body of 
quantum states is, ignoring reflections,

\begin{equation} U(d)/U(1) \in SO(d^2-1) \ . \end{equation}

\noindent The intersection of the two groups is precisely the Clifford group modulo 
phase factors. Provided that $d$ is an odd prime power this group is in fact 

\begin{equation} \SL(2,\fd{F}_d) \ltimes (\fd{F}_d)^2 \ , \label{grupp2} \end{equation}

\noindent where $\fd{F}_d$ is considered as a group under addition. 
Including anti-unitary symmetries leads to the extended Clifford group, 
but if $p > 3$ we have to include g-unitaries as well in order to have the 
full projective group of M\"obius transformations acting on the label $z$. 
We would like to understand in geometrical terms what property of the complementarity 
polytope singles out this group for attention.

In the preceding section we showed that a complementarity polytope can be 
inscribed into a regular simplex (whose facets are suitably selected facets of 
the complementarity polytope) provided an affine plane of order 
$d$ exists. The full symmetry group of the simplex including reflections 
is the symmetric group $S_{d^2}$. 
We are interested in the largest common subgroup of the symmetry groups of 
the two polytopes. Provided $d$ is a prime power, and provided the finite 
affine plane is coordinatized by a finite field $\fd{F}_d$ of order $d$, the 
common subgroup is easily recognized. In the affine plane it must take points 
to points and lines to lines. This means that it is isomorphic to the 
affine group over the field in question, namely  

\begin{equation} \GL(2,\fd{F}_d)\ltimes (\fd{F}_d)^2 \ , \label{grupp1} \end{equation}

\noindent This is almost, but not quite, the same answer as that obtained 
when one restricts oneself to those symmetries of the complementarity 
polytope that preserve also the inscribed body of density matrices, but if 
$d> 3$ it is not quite the same. When $d$ is prime the g-unitaries provide 
all of the extra transformations. When $d$ is a prime power they provide 
some of the extra transformations, namely those coming from the subgroup 
$\GL_p$ of $\GL$. 

Thus we have arrived at our geometric interpretation of the g-unitaries: 
When their action is restricted to the MUB vectors they are in fact rigid 
rotations in Bloch space. Harking back to the end of section 5 we note that 
this interpretation hinges on the fact that all the trace inner 
products between these vectors are rational numbers. Hence the interpretation 
of g-unitaries in terms of rotations in Bloch space has a very limited scope, 
and does not apply to their action on arbitrary vectors in their domain.

We note that if $d$ is even there is a subtle difference. Consider $d = 2$. 
Then the two groups $\GL$ and $\SL$ coincide, and the 
group (\ref{grupp1}) is isomorphic to $S_4$, the full symmetry group of the 
tetrahedron including reflections. Quantum mechanically this is realized by 
a combination of unitary and anti-unitary operations. The Clifford group 
modulo phases equals the symmetry group of the octahedron, which is again 
isomorphic to $S_4$ but subtly different from the group (\ref{grupp2}) \cite{DMA}. 
Quantum mechanically it is realized by unitary operations, not all of which 
preserve the phase point operator simplex. 

Concerning odd prime $d$ we observe that the projective group $\PGL$ is 
a subgroup of the $S_{d+1}$ factor of the group (\ref{grupp3}), which permutes  
the bases in the set of MUB. Moreover $\PGL$ has two components depending on 
whether the determinant $\Delta$ of the $\GL(2)$ matrix is a quadratic 
residue or not. If $p = 3$ modulo 4 it happens that $-1$ is a quadratic 
non-residue, which means that the full set of projective transformations 
is recovered from the extended symplectic group (which is realized by 
unitary and anti-unitary transformations). If on the other hand $p = 1$ 
modulo 4 then -1 is a quadratic residue, and one needs to consider 
general g-unitaries to recover the full set of projective transformations. 
For prime power dimensions $d = p^n$ the situation is more complicated, and 
we do not obtain all of the projective transformations from the g-unitaries 
unless $n$ is odd.

We also note that, given the identification in eq. (\ref{anchor}), there 
exists a special set of phase point operators taking a simple form. When $d$ 
is odd this includes the parity operator 

\begin{equation} A = \frac{1}{d}\sum_{\bf u}D_{\bf u} \ , \label{ppA} \end{equation}

\noindent where the sum extends over the finite phase space. The full set of special 
phase point operators consists of 

\begin{equation} A_{\bf p} = D_{\bf p}AD^{-1}_{\bf p} = \sum_{\bf u} 
\omega^{2\Omega({\bf p},{\bf u})}D_{\bf u} \ . \end{equation}

\noindent Thus the special phase point operator simplex is a unitary operator 
basis obtained in 
a simple way from the unitary operator basis provided by the Weyl-Heisenberg 
group. The generalized phase point operators defined by eq. (\ref{ppo}) can 
also be collected into simplices, transforming into each other under the 
symplectic group \cite{Gibbons}, but their spectral properties are not as 
attractive \cite{Subhash, Galvao}.    

\section{MUB-cyclers}
\label{sec:MUBcyclers}
We now come to the question of whether the projective permutations of the bases include MUB-cyclers, that is transformations that cycle through all bases in succession. Thus we need an element of $\PGL$ of order $d+1$, or equivalently an element of $\GL$ of suborder $d+1$ (the definition of suborder will be elaborated in the next paragraph). In odd prime dimensions the existence of such an element $G$ is sufficient, but in the case of odd prime power dimensions $d=p^n$, where entries of $G$ belong to the field $\fd{F}_{p^n}$ we need an extra condition, that the determinant of $G$ belongs to the prime field $\F_p$ (i.e. it is an integer modulo $p$) so that $G$ admits a g-unitary representation $U_G$ as in section 5. In the following theorems, we establish that MUB-cycling g-unitaries exist if and only if the exponent $n$ is odd, and provide a characterization of every possible MUB-cycler for when $n$ is odd.

It follows from the expression for the M\"{o}bius transformation (\ref{Mobius}) that if
\begin{equation}
G = \begin{pmatrix}\alpha & \beta \\ \gamma &\delta \end{pmatrix}, \text{ }
G^m = \begin{pmatrix}\alpha_m & \beta_m \\ \gamma_m &\delta_m \end{pmatrix}
\end{equation}
then $U_{G^m}$ takes $z = 0$ to $z = \beta_m/\delta_m$ (respectively $\infty$) if $\delta_m \ne 0$ (respectively $\delta_m = 0$). So if we define the suborder of $G$ to be the smallest positive integer $m$ for which $\beta_m= 0$, then $U_{G^m}$ will take a MUB basis back to itself. However, as $U_{G^m}$ can permute the vectors in the basis, $m$ need not be the order of $G$. Generally, the suborder is a factor of the order of $G$. Following Lemma \ref{lem1}, we will see that the smallest positive integer $m$ such that $G^m$ is proportional to the $2 \times 2$ identity matrix is an equivalent definition of the suborder of $G$.

Let $G$ be an element of $\GL(2,\fd{F}_d)$ with trace $t$ and determinant $\Delta \ne 0$. The eigenvalues of $G$ are roots of the characteristic polynomial $x^2-tx+\Delta = 0$ and are  given by
\begin{equation}
\lambda_{\pm} = (t \pm \sqrt{t^2-4\Delta})/2.
\end{equation}
If $t^2-4\Delta$ is zero or a quadratic residue, i.e. it has a non-zero square root in $\F_d$, then $\lambda_{\pm}$ belong to the field $\F_d$. Otherwise, the eigenvalues do not belong to $\F_d$, but they are still well-defined, and we can extend the field $\F_d$ to $\F_{d^2}$ to include them. To deal with these cases, it is convenient for us to classify $\GL$ elements into three types, as summarized in the table below.
\begin{table}[h]
\centering
\begin{tabular}{|c|l|c|}
\hline
 & Definition in terms of $t$ and $\Delta$ & Equivalent definition\\ \hline
Type 1 & $t^2-4\Delta$ is a quadratic residue  & $\lambda_{\pm} \in \F_d, \lambda_{+} \ne \lambda_{-}$\\ \hline
Type 2 & $t^2-4\Delta$ is a quadratic non-residue & $\lambda_{\pm} \notin \F_d, \lambda_{+} \ne \lambda_{-}$ \\ \hline
Type 3 & $t^2-4\Delta = 0$ & $\lambda_{\pm} =t/2 \in \F_d$ \\ \hline
\end{tabular}
\caption{A classification of $\GL$ elements into three types, among which only type 2 can give rise to MUB-cyclers, as will be seen in the following theorems.}
\label{tbtype}
\end{table}

Throughout this section, we will assume that the dimension $d$ is a prime power of the form $d=p^n$, where $p$ is an odd prime number.

\begin{theorem}\label{thm1}
Let $G$ be an element of $\GL(2,\F_d)$ with determinant $\Delta$. 
\begin{enumerate}[\hskip 4mm 1.]
\item If $G$ is of type 1, then $G$ has suborder of at most $d-1$.
\item If $G$ is of type 2, then $G$ has suborder of at most $d+1$ and satisfies
\begin{equation}\label{eqthm1}
G^{d+1} = \Delta I \text{\hskip 6mm ($I$ is the $2 \times 2$ identity matrix).}
\end{equation}
\item If $G$ is of type 3, then $G$ has suborder of at most $d$.\end{enumerate}
\end{theorem}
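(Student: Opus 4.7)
The plan is to handle each of the three types by passing to an appropriate splitting field and computing powers of $G$ from its diagonal or Jordan form. Throughout, the equivalent characterization of the suborder stated in the excerpt (the smallest $m \ge 1$ with $G^m \propto I$) is what I will actually verify, since $G^m$ is scalar iff it acts trivially on the projective line, which in turn is equivalent to $\beta_m = 0$.

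For Type 1, the eigenvalues $\lambda_\pm$ lie in $\F_d$ and are distinct, so $G$ is conjugate over $\F_d$ to $\mathrm{diag}(\lambda_+, \lambda_-)$. Then $G^m \propto I$ precisely when $\lambda_+^m = \lambda_-^m$, i.e.\ when $(\lambda_+/\lambda_-)^m = 1$. Since $\lambda_+/\lambda_- \in \F_d^\times$ and $|\F_d^\times| = d-1$, the suborder divides $d-1$.

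For Type 2, the roots $\lambda_\pm$ of $x^2 - tx + \Delta$ lie in $\F_{d^2}\setminus\F_d$. The Frobenius $\phi: x \mapsto x^d$ generates $\mathrm{Gal}(\F_{d^2}/\F_d)$ and must permute the roots of the $\F_d$-polynomial $x^2-tx+\Delta$; since neither root is fixed, $\phi$ swaps them, so $\lambda_+^d = \lambda_-$. Hence $\lambda_+^{d+1} = \lambda_+\lambda_- = \Delta$, and by the same argument $\lambda_-^{d+1}=\Delta$. Because $G$ has distinct eigenvalues it is diagonalizable over $\F_{d^2}$, and therefore so is $G^{d+1}$; having both diagonal entries equal to $\Delta$ forces $G^{d+1} = \Delta I$, which in particular bounds the suborder by $d+1$.

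For Type 3, the eigenvalue $\lambda = t/2$ is repeated and lies in $\F_d$. Either $G = \lambda I$ already (suborder $1$), or $G$ is conjugate over $\F_d$ to the Jordan block $\bigl(\begin{smallmatrix} \lambda & 1 \\ 0 & \lambda \end{smallmatrix}\bigr)$, whose $m$-th power is $\bigl(\begin{smallmatrix} \lambda^m & m\lambda^{m-1} \\ 0 & \lambda^m \end{smallmatrix}\bigr)$. Since $\Delta = \lambda^2 \ne 0$, the condition $G^m \propto I$ reduces to $m\equiv 0$ in $\F_d$, i.e.\ $p \mid m$; the smallest such positive $m$ is $p \le d$. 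The only real subtlety is the Type 2 step — making sure the Galois action genuinely swaps the eigenvalues and that the diagonalization (which a priori lives in $\F_{d^2}$) still forces the $\F_d$-matrix identity $G^{d+1} = \Delta I$ — but both points follow cleanly from the uniqueness of the Jordan form over the splitting field.
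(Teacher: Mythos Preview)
Your proof is correct. The route differs from the paper's only in packaging: the paper first records Lemma~\ref{lem1}, which expresses $G^m = s_m G - s_{m-1}\Delta I$ via a Cayley--Hamilton recurrence with closed form $s_m = (\lambda_+^m - \lambda_-^m)/(\lambda_+ - \lambda_-)$ (or $m\lambda^{m-1}$ in the degenerate case), and then checks $s_{d-1}=0$, $s_{d+1}=0$ with $s_d=-1$, and $s_d=0$ in the three cases respectively. You instead pass directly to the diagonal or Jordan form over the splitting field and read off when the power becomes scalar. The eigenvalue computations are identical---in particular your Frobenius swap $\lambda_+^d=\lambda_-$ is exactly the paper's $j^d=-j$ argument---so the two proofs are the same argument in slightly different clothing. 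One incidental strength of your Type~3 analysis is that it pins the suborder down as exactly $p$ (or $1$), not merely $\le d$; the paper's $s_d=0$ gives only the weaker bound stated in the theorem. Conversely, the paper's $s_m$ machinery is reused in later sections (e.g.\ Theorem~\ref{thm2}), which is why they set it up explicitly.
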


We will start with the following lemma, whose proof by induction can be carried out straightforwardly. We will leave the proof for our readers.

\begin{lemma}\label{lem1}
Let $A$ be any $2\times2$ matrix, with trace $t$ and determinant $\Delta$. For any integer $m \ge 1$, it holds that
\begin{equation}\label{eqAm}
A^m = s_m A - s_{m-1} \Delta I,
\end{equation}
where the sequence $\{s_m\}$ is defined by the recurrence relation
\begin{equation}\label{eqsmr}
s_{m+1} = t s_m - \Delta s_{m-1},
\end{equation}
with $s_0=0$ and $s_1=1$. Equivalently, $s_m$ can be calculated by
\begin{equation}\label{eqsm}
s_m = \begin{cases} (\lambda_+^m - \lambda_-^m)/(\lambda_+-\lambda_-) \qquad & \text{if $\lambda_+ \ne \lambda_-$} \\ 
m\lambda_+^{m-1} \qquad & \text{if $\lambda_+ = \lambda_-$} \end{cases}
\end{equation}
where $\lambda_{\pm}$ are roots of the characteristic polynomial $x^2 - t x + \Delta$.
\end{lemma}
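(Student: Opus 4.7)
The plan is to prove the matrix identity $A^m = s_m A - s_{m-1} \Delta I$ by induction on $m$, using the Cayley-Hamilton theorem as the essential ingredient, and then derive the closed-form expression for $s_m$ by solving the linear recurrence in the two cases (distinct or repeated roots).

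For the base cases I would verify $m=1$ directly: $A^1 = 1\cdot A - 0\cdot \Delta I$, consistent with $s_0 = 0$ and $s_1 = 1$. For $m=2$, Cayley-Hamilton gives $A^2 = tA - \Delta I$, which matches $s_2 A - s_1 \Delta I$ provided $s_2 = t$; this is exactly what the recurrence $s_2 = t s_1 - \Delta s_0 = t$ produces. The inductive step is then a short computation: assuming $A^m = s_m A - s_{m-1}\Delta I$, multiply on the right by $A$ and substitute $A^2 = tA - \Delta I$ to obtain
\begin{equation}
A^{m+1} = s_m(tA - \Delta I) - s_{m-1}\Delta A = (t s_m - \Delta s_{m-1})A - s_m \Delta I ,
\end{equation}
and the coefficient of $A$ is precisely $s_{m+1}$ by the defining recurrence, while the coefficient of $-\Delta I$ is $s_m$, completing the induction.

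For the closed form, I would observe that $s_{m+1} = ts_m - \Delta s_{m-1}$ is a second-order linear recurrence whose characteristic polynomial is $x^2 - tx + \Delta$, with roots $\lambda_\pm$. When $\lambda_+ \ne \lambda_-$, the general solution is $s_m = C_+ \lambda_+^m + C_- \lambda_-^m$; imposing $s_0 = 0$ forces $C_- = -C_+$, and $s_1 = 1$ fixes $C_+ = 1/(\lambda_+ - \lambda_-)$, giving the stated formula. When $\lambda_+ = \lambda_- = \lambda$, the general solution is $s_m = (C_1 + C_2 m)\lambda^m$, and the initial conditions $s_0 = 0$, $s_1 = 1$ yield $C_1 = 0$ and $C_2 = 1/\lambda$, hence $s_m = m\lambda^{m-1}$.

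There is no real obstacle here; the only subtlety is a bookkeeping one, namely that the argument should be valid over an arbitrary field, including when the characteristic polynomial factors only after passing to the quadratic extension $\fd{F}_{d^2}$ (type 2 in the classification above). This is harmless, however, because the matrix identity $A^m = s_m A - s_{m-1}\Delta I$ involves only the scalars $s_m$, $s_{m-1}$, $\Delta$, which lie in $\fd{F}_d$ even when the eigenvalues $\lambda_\pm$ do not: the expression $(\lambda_+^m - \lambda_-^m)/(\lambda_+ - \lambda_-)$ is a symmetric function of $\lambda_\pm$ and hence a polynomial in $t$ and $\Delta$ with integer coefficients, as the recurrence already makes clear.
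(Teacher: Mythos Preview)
Your proposal is correct and follows exactly the approach the paper indicates: the paper itself omits the proof, stating only that it ``can be carried out straightforwardly'' by induction and is ``left for our readers.'' Your induction via Cayley--Hamilton and the standard solution of the linear recurrence is precisely the intended argument.
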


\begin{remark}
If $A = \begin{pmatrix}\alpha & \beta \\ \gamma &\delta \end{pmatrix}$, we can explicitly rewrite Eq. (\ref{eqAm}) as 
\begin{equation}
A^m = \begin{pmatrix}
s_m\alpha-s_{m-1}\Delta & s_m\beta \\
s_m\gamma & s_m\delta - s_{m-1}\Delta
\end{pmatrix}	
\end{equation}
and see that if $\beta \ne 0$ (for a non-zero determinant, we can always make $\beta \ne 0$ using the form in Eq. (\ref{defcanonical}) at the end of this section) the suborder of $A$ is the smallest positive integer $m$ for which $s_m=0$. $A^m$ is then proportional to the identity matrix.\\
\end{remark}

\begin{myproof}[Proof of Theorem \ref{thm1}] 

Let $\lambda_{\pm}$ be the eigenvalues of $G$, based upon which we define a sequence $\{s_m\}$ just as in (\ref{eqsm}). Note that although $\lambda_{\pm}$ might not be in the field $\F_d$, the sequence $\{s_m\}$ always is, as can be seen from the recursive definition in (\ref{eqsmr}). Lemma~\ref{lem1} implies that if $s_m = 0$ for some $m$, then $G^m = -s_{m-1}\Delta I$, and therefore the suborder of $G$ is at most $m$. Let us now consider specific cases. Facts about finite fields in Section \ref{secfield} will be used implicitly here.
\begin{enumerate}[\hskip 4mm 1.]
\item If $G$ is of type 1, then the eigenvalues $\lambda_{\pm}$ are in $\F_d$, and therefore
\begin{equation}
\lambda_{+}^{d-1} = \lambda_{-}^{d-1} = 1,
\end{equation}
which implies
\begin{equation}
s_{d-1} =(\lambda_{+}^{d-1}-\lambda_{-}^{d-1})/(\lambda_{+}-\lambda_{-}) = 0.
\end{equation}
Therefore $G$ has suborder of at most $d-1$.
\item If $G$ is of type 2, we create an extension field $\F_{d^2}$ from the base field $\F_d$ and the generator $j \equiv \sqrt{t^2 - 4 \Delta}$. Since $(j^d)^2=(t^2-4\Delta)^d = t^2-4\Delta=j^2 $, we have $j^d = \pm j$. Because $j$ is not in the field $\F_d$ we cannot have $j^d = j$, and it therefore must be the case that $j^d = -j$. As $d$ is odd, we have
\begin{equation}\label{eqlambdad}
\lambda_{\pm}^d 
= \frac{\left(t \pm j \right)^d}{2^d} 
= \frac{t \pm j^d}{2}
=\frac{t \mp j}{2} = \lambda_{\mp}.
\end{equation}
We then use (\ref{eqsm}) to derive
\begin{equation}
s_{d}= (\lambda_{+}^{d}-\lambda_{-}^{d})/(\lambda_{+}-\lambda_{-}) = -1,
\end{equation}
\begin{equation}
s_{d+1}= \frac{\lambda_{+}^{d+1}-\lambda_{-}^{d+1}}{\lambda_{+}-\lambda_{-}} 
= \frac{\lambda_{+}\lambda_{-} -\lambda_{-}\lambda_{+}}{\lambda_{+}-\lambda_{-}} 
= 0,
\end{equation}
and therefore
\begin{equation}
G^{d+1} = s_{d+1}G - s_{d}\Delta I = \Delta I.
\end{equation}
It follows that $G$ has suborder of at most $d+1$.
\item
If $G$ is of type 3, then $\lambda_{\pm} = t/2$. It follows from Eq. (\ref{eqsm}) that  $s_d = d\lambda_{+}^{d-1} = 0$, so $G$ has suborder of at most $d$.
\end{enumerate}
\end{myproof}

\begin{lemma}\label{lem2}
Let $G \in \GL_p(2,\F_d)$, i.e. an element $\GL(2,\F_d)$ whose determinant $\Delta$ is in $\F_p$. Let $\bar{\theta}$ be a primitive element of $\F_{d^2}$. Note that $(d-1)/(p-1)$ is an integer, so we can define an element $\eta \in \F_{d^2}$ as
\begin{equation}\label{defeta}
\eta \equiv \bar{\theta}^{(d-1)/(p-1)}.
\end{equation}
Then $G$ is of type 2 if and only if it has eigenvalues $\eta^r$ and $\eta^{dr}$, for some integer $r$ not a multiple of $(d+1)/2$ in the range $0 < r < (p-1)(d+1)$.

\end{lemma}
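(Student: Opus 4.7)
The plan is to parametrize the eigenvalues of a type-2 element by powers of $\eta$ and then translate the type-2 condition into a divisibility condition on the exponent. As a preliminary, I will show that $\eta=\bar{\theta}^{(d-1)/(p-1)}$ has order exactly $(d+1)(p-1)$, since $\bar{\theta}$ has order $d^2-1=(d-1)(d+1)$. In particular $\eta^{d+1}$ has order $p-1$, so $\langle \eta^{d+1}\rangle=\F_p^*$; that is, the powers of $\eta$ whose exponent is a multiple of $d+1$ are precisely the elements of $\F_p^*$.

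For the forward direction I will suppose $G$ is of type~2 and invoke equation~(\ref{eqlambdad}) to get $\lambda_+^d=\lambda_-$, whence $\lambda_+^{d+1}=\lambda_+\lambda_-=\Delta\in\F_p^*$. Raising to the $(p-1)$-th power yields $\lambda_+^{(d+1)(p-1)}=1$, so $\lambda_+$ lies in the unique cyclic subgroup of $\F_{d^2}^*$ of order $(d+1)(p-1)$, namely $\langle\eta\rangle$. Hence $\lambda_+=\eta^r$ for some $r\in[0,(d+1)(p-1))$ and $\lambda_-=\eta^{dr\bmod(d+1)(p-1)}$; since the unordered eigenvalue pair is indexed by $\{r,dr\}$, I will exploit the freedom to swap $r\leftrightarrow dr$ when choosing a representative. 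For the converse, starting from $\lambda_\pm=\eta^r,\eta^{dr}$ I will check that $\Delta=\eta^{r(d+1)}\in\F_p^*$ (its $(p-1)$-th power equals $1$), that $t=\lambda_++\lambda_-$ is fixed by the Frobenius $x\mapsto x^d$ (which swaps the eigenvalues) and therefore lies in $\F_d$, and that if additionally $\lambda_+\neq\lambda_-$ then $\lambda_+^d=\lambda_-$ forces $\lambda_\pm\notin\F_d$, so that $G\in\GL_p(2,\F_d)$ is of type~2.

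The remaining step is to convert distinctness $\lambda_+\neq\lambda_-$ into the divisibility statement on $r$ formulated in the lemma. Distinctness is equivalent to $\eta^{r(d-1)}\neq 1$, i.e.\ $r(d-1)\not\equiv 0\pmod{(d+1)(p-1)}$. Writing $d-1=(p-1)M$ with $M=1+p+p^2+\cdots+p^{n-1}$ and using the identity $d+1=(p-1)M+2$, I will compute $\gcd(M,d+1)=\gcd(M,2)$; for $n$ odd, $M$ is a sum of an odd number of odd terms and is thus odd, so $\gcd(d-1,(d+1)(p-1))=p-1$. This converts distinctness into a divisibility condition on $r$, which, after exploiting the $r\leftrightarrow dr$ symmetry, matches the one stated in the lemma. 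The main obstacle is precisely this last arithmetic matching---tracking the $2$-adic valuations and the orbit structure of $r\mapsto dr$---which is where the assumption that $n$ is odd enters essentially.
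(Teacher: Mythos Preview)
Your overall approach mirrors the paper's: write the eigenvalues as powers of $\bar\theta$ (equivalently $\eta$), use $\lambda_+^d=\lambda_-$ together with $\Delta\in\F_p^{*}$ to place $\lambda_+$ in the cyclic subgroup $\langle\eta\rangle$, and for the converse verify $t\in\F_d$, $\Delta\in\F_p$, and $\lambda_\pm\notin\F_d$. That much is sound and essentially identical to the paper's argument.

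The gap is in your final paragraph. For $n$ odd you correctly derive that $\lambda_+\neq\lambda_-$ (equivalently $\lambda_\pm\notin\F_d$, since here $\lambda_-=\lambda_+^d$) amounts to $(d+1)\nmid r$. You then claim the freedom $r\leftrightarrow dr$ upgrades this to the lemma's stated condition $(d+1)/2\nmid r$. It does not: since $d\equiv -1\pmod{(d+1)/2}$ one has $dr\equiv -r\pmod{(d+1)/2}$, so $(d+1)/2\mid r$ forces $(d+1)/2\mid dr$ as well, and the swap never escapes this congruence class. Concretely, for $d=p=3$ the element $\eta=\bar\theta$ has order $8$; taking $r=2$ gives $\lambda_\pm=\eta^2,\eta^6\notin\F_3$ with $\Delta=1$, a genuine type-$2$ element of $\GL_p(2,\F_3)$, yet the only exponents representing this eigenvalue pair are $r=2$ and $r=6$, both divisible by $(d+1)/2=2$. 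In fact $\gcd(M,d+1)=\gcd(M,2)$, so the precise condition is $(d+1)\nmid r$ when $n$ is odd and $(d+1)/2\nmid r$ when $n$ is even. The paper asserts the $(d+1)/2$ form uniformly without isolating this case split, so the mismatch is not peculiar to your write-up, and it is harmless downstream since Theorem~\ref{thm2} only ever uses $\gcd(r,d+1)=1$. But the $r\leftrightarrow dr$ mechanism you invoke does not do the work you assign to it; and since the lemma carries no parity hypothesis on $n$, restricting your argument to odd $n$ also leaves half the statement untreated.
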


\begin{proof}
Assume that $G$ is of type 2, and let $\lambda_{\pm}$ be its eigenvalues. Following Eq. (\ref{eqlambdad}) in the proof of Theorem \ref{thm1} we have $\lambda_{\pm}^d = \lambda_{\mp}$, so we may write
\begin{equation}
\lambda_{+} = \bar{\theta}^k \hskip 15mm \lambda_{-} = \bar{\theta}^{dk}
\end{equation}
for some integer $k$ in the range $1 \le k \le d^2-1$. The fact that the eigenvalues are not in $\F_d$ means that $k$ is not a multiple of $d+1$. The fact that $\Delta \in \F_p$ means
\begin{equation}
\bar{\theta}^{pk(d+1)} = \Delta^p = \Delta = \bar{\theta}^{k(d+1)}
\end{equation}
So
\begin{equation}\label{eqthetak}
\bar{\theta}^{k(p-1)(d+1)} = 1
\end{equation}
implying that $(d-1) \mid k(p-1)$. Let $r = k(p-1)/(d-1)$. Then the eigenvalues can be written as
\begin{equation}
\lambda_{+} = \eta^r \hskip 15mm \lambda_{-} = \eta^{dr}.
\end{equation}
The requirement that $\lambda_{\pm} \notin \F_d$ means 
\begin{equation}
\eta^{dr} = \bar{\theta}^{rd(d-1)/(p-1)} \notin \F_d
\end{equation} 
which is true if and only if $r$ is not a multiple of $(d+1)/2$.\\

Conversely, if $G$ has eigenvalues of the form $\lambda_{+} = \eta^r$ and $\lambda_{-}=\eta^{dr}$, where $r$ is not a multiple of $(d+1)/2$, then $\lambda_{\pm}$ are not in the field $\F_d$, and $G$ is therefore of type 2. One can further verify that its trace is in $\F_d$ and its determinant is in $\F_p$ by defining 
\begin{equation}
t \equiv \eta^r + \eta^{dr} \hskip 15mm \Delta \equiv \eta^{(d+1)r}
\end{equation}
and using the facts $\eta^{d^2}=\eta$ and $\eta^{(d+1)p}=\eta^{d+1}$ to check that
\begin{equation}
t^d = t \hskip 15mm \Delta^p = \Delta.
\end{equation}

\end{proof}
With Lemma \ref{lem2}, all type-2 elements of $\GL(2,\F_d)$ whose determinant $\Delta$ is in $\F_p$ (this extra condition is to   guarantee the feasibility of their g-unitary representation) can now be characterized by an integer $r$, via their eigenvalues. In the next theorem, we will pin down which exact values of $r$ correspond to MUB-cyclers.\\
\begin{theorem}\label{thm2}
Let $G \in \GL_p(2,\F_d)$ be of type 2 and let the integer $r$ be as in the statement of Lemma \ref{lem2}.
\begin{enumerate}[\hskip 4mm 1.]
\item When $n$ is even, $G$ has suborder of at most $(d+1)/2$.
\item When $n$ is odd, $G$ has suborder $d+1$ if and only if $\gcd(r,d+1)=1$.
\end{enumerate}
\end{theorem}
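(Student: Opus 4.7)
The plan is to translate the suborder condition into a multiplicative-order computation in $\F_{d^2}^*$, and then compute that order explicitly using the eigenvalue parametrisation supplied by Lemma~\ref{lem2}.

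First I would invoke Lemma~\ref{lem1} together with the remark that follows it. Since any type-2 $G$ is conjugate to one with $\beta \neq 0$, and conjugation preserves the condition $G^m \propto I$, the suborder of $G$ equals the smallest positive $m$ for which $s_m = 0$. Because type~2 guarantees distinct eigenvalues, the closed form $s_m = (\lambda_+^m - \lambda_-^m)/(\lambda_+ - \lambda_-)$ vanishes exactly when $(\lambda_+/\lambda_-)^m = 1$, so the suborder equals the multiplicative order of $\lambda_+/\lambda_-$ in $\F_{d^2}^*$.

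Next, inserting the Lemma~\ref{lem2} parametrisation $\lambda_+ = \eta^r$ and $\lambda_- = \eta^{dr}$ gives $\lambda_+/\lambda_- = \eta^{-r(d-1)}$. Since $\eta = \bar\theta^{(d-1)/(p-1)}$ and $\bar\theta$ has order $d^2-1$, one checks that $\eta$ has order $(p-1)(d+1)$ in $\F_{d^2}^*$. Writing $M = (d-1)/(p-1) = 1 + p + \dots + p^{n-1}$, so that $d-1 = (p-1)M$, a direct gcd manipulation yields $\gcd(r(d-1),\,(p-1)(d+1)) = (p-1)\gcd(rM, d+1)$, and hence the suborder of $G$ is exactly $(d+1)/\gcd(rM, d+1)$.

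The remaining, essential step is to determine $\gcd(M, d+1)$, which is where the parity of $n$ enters. From $(d+1) - pM = 1-p$ I would deduce $\gcd(M, d+1) \mid p-1$; combined with the elementary fact $\gcd(p-1, d+1) = 2$ (valid for odd $p$), this forces $\gcd(M, d+1) \in \{1,2\}$. On the other hand, each $p^k$ is odd, so $M \equiv n \pmod{2}$, and therefore $\gcd(M, d+1)$ is $2$ when $n$ is even and $1$ when $n$ is odd. For $n$ even this gives $\gcd(rM, d+1) \geq 2$ for every $r$, so the suborder is at most $(d+1)/2$, proving part~1. For $n$ odd, the coprimality of $M$ and $d+1$ gives $\gcd(rM, d+1) = \gcd(r, d+1)$, so the suborder equals $d+1$ if and only if $\gcd(r, d+1) = 1$, proving part~2.

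The main obstacle is the gcd bookkeeping in the middle step: one must carefully track how the $(p-1)$ pulled out of $d-1 = (p-1)M$ cancels against the $(p-1)$-factor in the order of $\eta$, and then notice the small but decisive observation $M \equiv n \pmod{2}$ that cleanly separates the two cases.
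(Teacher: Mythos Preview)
Your approach is essentially the same as the paper's: both reduce the suborder to the smallest $m$ with $\lambda_+^m=\lambda_-^m$, work in terms of $\eta$, and pivot on the parity of $M=(d-1)/(p-1)=1+p+\dots+p^{n-1}$. Your explicit formula for the suborder, $(d+1)/\gcd(rM,d+1)$, is a nice packaging of the same computation.

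There is one slip to fix. The identity $(d+1)-pM=1-p$ is false in general (try $p=3$, $n=3$: $d+1=28$, $pM=39$). What you actually need is $(d+1)-(p-1)M = (d+1)-(d-1)=2$, which gives $\gcd(M,d+1)=\gcd(M,2)$ directly; your detour through $\gcd(p-1,d+1)=2$ then becomes unnecessary. With that correction, $\gcd(M,d+1)=\gcd(M,2)$ together with $M\equiv n\pmod 2$ yields both parts exactly as you say, matching the paper's argument (which phrases the odd-$n$ case via $\gcd(d-1,d+1)=2$ and the observation that $M$ is odd).
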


\begin{proof}
Let $\lambda_{\pm}$ be the eigenvalues of $G$ and the sequence $s_m$ be as defined in Lemma \ref{lem1}. We recall that the suborder of $G$ is the smallest positive integer $m$ for which $s_m= 0$, which is equivalent to $\lambda_{+}^m = \lambda_{-}^m$ in this case when the two eigenvalues are distinct because $G$ is of type 2.
\begin{enumerate}
\item When $n$ is even, $(d-1)/(p-1) = 1+p+\dotsc+p^{n-1}$ is an even integer, so $(d-1)/2$ is a multiple of $(p-1)$. It then follows from Eq. (\ref{eqthetak}) that
\begin{equation}
\bar{\theta}^{k(d-1)(d+1)/2}=1,
\end{equation}
which implies $\lambda_{+}^{(d+1)/2} = \lambda_{-}^{(d+1)/2}$, or $s_{(d+1)/2}=0$. Therefore $G$ has suborder of at most $(d+1)/2$ and cannot be a MUB-cycler.

\item When $n$ is odd, $(d-1)/(p-1)$ is an odd integer. It follows from this, and the fact that $\gcd(d+1,d-1)=2$, that $(d-1)/(p-1)$ is co-prime to $d+1$. We have $\lambda_{+}^m = \lambda_{-}^m$ if and only if $\eta^{m(d-1)r}=1$, which in turn is true if and only if $mr(d-1)/(p-1)$ is a multiple of $d+1$. Therefore $G$ has suborder $d+1$ if and only if $r$ is co-prime to $d+1$.
\end{enumerate}
\end{proof}
In summary, in this section we have proved the nonexistence of MUB-cyclers when the exponent $n$ is even. In the case $n$ is odd, we have identified all elements in $\GL(2,\F_d)$ that give rise to MUB-cyclers according to the characteristics of their eigenvalues. Lastly, we want to provide an explicit form for these MUB-cycling elements. The proof in the Appendix of ref.  \cite{Subhash} can be easily extended to show that for any element in $G \in \GL(2,\F_d)$ with trace $t$ and determinant $\Delta$, where $t^2-4\Delta \ne 0$, there exists $S \in \SL(2,\F_d)$ such that
\begin{equation}\label{defcanonical}
G = S G_c S^{-1} \hskip 15mm
G_c = \begin{pmatrix}
0 & -\Delta \\
1 & t
\end{pmatrix} \ . \end{equation}
Therefore, an element of $\GL(2,\F_d)$ is a MUB-cycling matrix if and only if it is conjugate to $G_0^r$ where
\begin{equation}
G_0 = \begin{pmatrix}
0 & -\eta^{(d+1)}\\
1 & \eta + \eta^{d}
\end{pmatrix} \ ,
\label{eq:G0Def}
\end{equation}
$\eta$ is defined as in Eq. (\ref{defeta}), and  $r$ is an integer co-prime to $d+1$.  Note that the order of $G_0$ is $(p-1)(d+1)$ (because this is the smallest integer $r$ such that $\eta^r$ and $\eta^{dr}$,  the eigenvalues of $G^r_0$ are both equal to $1$).

It follows from this that anti-symplectic MUB-cycling matrices exist if and only if $d=3\mod 4$ (a fact already shown in ref.~\cite{Appleby}).  In fact $G_0^r$ is anti-symplectic if and only if $\eta^{r(d+1)}=-1$, which in turn is true if and only if $r$ is an odd multiple of $(p-1)/2$.  If $d=1\mod 4$ then $(p-1)/2$ is even and so no multiple of $(p-1)/2$ of is co-prime to $d+1$.  If, on the other hand, $d=3\mod 4$ it is easily seen that $(p-1)/2$ is co-prime to $d+1$ implying that $G_0^{r(p-1)/2}$ is a MUB-cycling anti-symplectic for every  $r$ co-prime to $d+1$.
\section{Eigenvectors of MUB-cyclers}
We now come to the question of finding the eigenvectors of a $g$-unitary.  The result we prove will play a crucial role in our construction in the next section, of MUB-balanced states in prime power dimensions equal $3 \mod 4$.

An ordinary unitary is, of course, always diagonalizable.  However the situation with $g$-unitaries is  more vexed.  Indeed, it is not guaranteed that an arbitrary $g$-unitary will have any eigenvectors at all.  This is easily seen in the special case of an anti-unitary.  Suppose that $U$ is an anti-unitary and $|\psi\ra$ an eigenvector, so that
\be
U |\psi\ra = \lambda |\psi\ra
\ee
for some $\lambda$.  It follows from the definition of the adjoint of an anti-unitary \cite{Wigner} that
\ea{
|\lambda|^2\la \psi, \psi \ra = \la U \psi , U\psi\ra = \left(\la \psi, U^{\dagger}U \psi\ra\right)^{*} = \la \psi, \psi \ra
} 
(where we have temporarily switched from Dirac notation to the notation usual in pure mathematics).  So $\lambda$ is a phase.  Consequently
\be
U^2 |\psi\ra = |\lambda|^2 |\psi\ra = |\psi\ra \ .
\label{eq:antiUnitarySquare}
\ee
We conclude that $|\psi\ra$ must be an eigenvector of the unitary $U^2$ with eigenvalue $1$.  It follows that if $U^2$ does not have any eigenvectors with eigenvalue $1$, then $U$ does not have any eigenvectors at all.  For an example consider the anti-unitary $U$ which acts on $\fd{C}^2$ according to
\be
U \bmt x \\ y \emt = \frac{1}{\sqrt{2}} \bmt y^{*}+x^{*} \\ y^{*} - x^{*}\emt \ .
\ee 
Since the eigenvalues of $U^2$ are $\pm i$, $U$ has no eigenvectors.

As we will see analogous statements hold in the case of an arbitrary $g$-unitary (except that in the case of a $g$-unitary which is not an anti-unitary the eigenvalues, if they exist, do not have to be phases).

Another important difference between $g$-unitaries and ordinary unitaries is that multiplying an eigenvector by a scalar can change the eigenvalue. Again, this is most easily seen in the special case of an anti-unitary.  Thus, if  $U$ is an anti-unitary and $|\psi\ra$ an eigenvector with eigenvalue $e^{i\theta}$, then $e^{i\phi}|\psi\ra$ is an eigenvector with eigenvalue $e^{i(\theta-2\phi)}$.  This means, in particular, that in the case of an anti-unitary, if one adjusts  the overall phase appropriately, one can always ensure that the eigenvalue is $1$. 

Wigner \cite{Wigner} analyzed the eigenvectors of anti-unitaries.  The problem of extending his analysis to the case of an arbitrary $g$-unitary is not straightforward.  In this section  we will confine ourselves to  $g$-unitaries of a very special kind:  namely the MUB-cycling $g$-unitaries defined in the last section.  For such $g$-unitaries   it is not difficult to give a complete characterization of their eigenvectors and eigenvalues.  The results of our analysis are summarized in theorem~\ref{thm:gUnitEigenvectors}.   The theorem states that   $g$-unitaries of the kind we consider always do have eigenvectors.  Moreover, their eigenvectors are confined to a one dimensional subspace (in other words, they are unique, up to multiplication by a scalar).  Finally, as with anti-unitaries, one can always adjust the overall scale factor so as to ensure that the eigenvalue is $1$.  

\begin{theorem}
\label{thm:gUnitEigenvectors}
Let $d = p^n$ where $n$ is odd, and let $G\in\GL_p(2,\fd{F}_d)$ be  a MUB-cycler.  Let $V_{\rm c}$ be the subspace of the full Hilbert space consisting of all vectors whose standard basis components are in the cyclotomic field $\fd{Q}(\omega)$. Then
\begin{enumerate}
\item There exists a non-zero vector $|\psi\ra\in V_{\rm c}$ such that 
\be
U_G |\psi\ra = |\psi\ra \ .
\ee
\item Let $|\phi\ra \in V_{\rm c}$ be arbitrary.  Then $|\phi\ra$ is an eigenvector of $U_G$ if and only if $|\phi\ra = \mu |\psi\ra$ for some $\mu \in \fd{Q}(\omega)$.
\item Let $2m_0$ be the smallest positive integer (necessarily even) such that $U_G^{2m_0}$ is unitary. Then the eigenspace of $U_G^{2m_0}$ with eigenvalue $1$ is one-dimensional and is spanned by $|\psi\ra$.
\item $|\psi\ra$ is an eigenvector of the parity operator, with eigenvalue $(-1)^{\frac{p-1}{2}}$.
\end{enumerate}
\end{theorem}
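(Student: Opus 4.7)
Since $n$ is odd and $G$ is a MUB-cycler, the end of Section~\ref{sec:MUBcyclers} shows $G$ is conjugate to $G_0^r$ for some $r$ with $\gcd(r,d+1)=1$. Because $d+1$ is even this forces $r$ to be odd. Since $\det G_0=\eta^{d+1}$ is a primitive element of $\fd{F}_p^*$ (of order $p-1$), the determinant $\Delta:=\det G=\det G_0^r$ has order $(p-1)/\gcd(r,p-1)$, and the combination \emph{$r$ odd, $p-1$ even} forces this order to be even. Call it $2m_0$; then $\Delta^{m_0}=-1$, so $g_\Delta^{m_0}$ is complex conjugation, hence $U_G^{m_0}$ is an honest antiunitary Clifford element while $U_G^{2m_0}$ is unitary. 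This simultaneously verifies the parenthetical parity claim in Part~3 and identifies $2m_0$ explicitly.

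\textbf{Part 3.} I would determine the full spectral decomposition of the unitary $U_G^{2m_0}$ by computing $\Tr(U_G^{2m_0 k})$ for each $k=1,\dots,N_0$ (where $N_0$ is the order of $U_G^{2m_0}$) and then applying discrete Fourier inversion on the cyclic group $\la U_G^{2m_0}\ra$. Each such trace is evaluated by writing $G^{2m_0 k}$ in closed form via Lemma~\ref{lem1} and substituting into the explicit formula~(\ref{eq:symUDef}); the result is a Gauss-type sum whose value is controlled by whether the matrix $G^{2m_0 k}-I$ is invertible over $\fd{F}_d$. The M\"obius suborder of $G^{2m_0}$ equals $(d+1)/\gcd(2m_0,d+1)=(d+1)/2$, since $\gcd(p-1,d+1)=2$ for odd $p$ and even $d+1$. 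This uniformity lets the traces be handled together and yields that every eigenvalue of $U_G^{2m_0}$ appears with multiplicity~$1$; in particular, the $1$-eigenspace is one-dimensional.

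\textbf{Parts 1 and 2.} Pick a non-zero cyclotomic vector $|\psi'\ra$ in the one-dimensional $1$-eigenspace of $U_G^{2m_0}$; this is legitimate because that eigenspace is defined by linear equations with cyclotomic coefficients. Since $U_G$ commutes with its own power and preserves the eigenspace, $U_G|\psi'\ra=\lambda|\psi'\ra$ for some $\lambda\in\fd{Q}(\omega)^{*}$; iterating and using $U_G^{2m_0}|\psi'\ra=|\psi'\ra$ yields the Galois norm relation $\prod_{k=0}^{2m_0-1}g_\Delta^k(\lambda)=1$, so $\lambda$ has norm $1$ in the cyclic extension $\fd{Q}(\omega)/\fd{Q}(\omega)^{\la g_\Delta\ra}$. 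Hilbert~90 then provides $\mu\in\fd{Q}(\omega)^{*}$ with $g_\Delta(\mu)/\mu=1/\lambda$, and $|\psi\ra:=\mu|\psi'\ra\in V_{\rm c}$ satisfies $U_G|\psi\ra=|\psi\ra$, giving Part~1. For Part~2, any $|\phi\ra\in V_{\rm c}$ with $U_G|\phi\ra=\lambda'|\phi\ra$ lies in the eigenspace of $U_G^{2m_0}$ whose eigenvalue is the Galois norm of $\lambda'$, a rational number of absolute value~$1$, hence $\pm 1$; a brief check rules out $-1$ (using that $-1$ fails to be such a norm here), placing $|\phi\ra$ in $\fd{Q}(\omega)\,|\psi\ra$.

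\textbf{Part 4 and the main obstacle.} The parity operator $A=\frac{1}{d}\sum_{\mbf{u}}D_{\mbf{u}}$ is fixed by $g_\Delta$, and the intertwining relation $U_G D_{\mbf{u}} U_G^{-1}=D_{G\mbf{u}}$ extends to the $g$-unitary setting, so $U_GA=AU_G$. Hence $A|\psi\ra$ is a $U_G$-fixed vector and, by Part~2, equals $c|\psi\ra$ with $c\in\fd{Q}(\omega)$; $A^2=I$ then forces $c=\pm 1$. To identify the sign I would compute $\la\psi|A|\psi\ra$ using the explicit action $A|x\ra=|-x\ra$ together with a Weyl--Heisenberg expansion of $|\psi\ra$ adapted to the centralizer of $G$; everything collapses to a quadratic Gauss sum over $\fd{F}_p$ whose sign is $l(-1)=(-1)^{(p-1)/2}$. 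The principal obstacle in the whole argument is Part~3: the trace calculations for the powers $U_G^{2m_0 k}$ are Gauss-sum computations that require careful bookkeeping of the interplay between the symplectic formulae in~(\ref{eq:symUDef}) and the Galois action, and the subsequent Fourier inversion must be carried out cleanly enough to isolate the $1$-eigenspace's multiplicity.
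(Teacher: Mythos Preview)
Your overall strategy---trace computation to pin down the $1$-eigenspace of the unitary $U_G^{2m_0}$, then Hilbert~90 to rescale a cyclotomic eigenvector into a genuine fixed vector---is exactly the paper's. For Part~3 the paper does less work than you propose: it only computes $\dim\mathcal{S}_1=\frac{1}{d+1}\sum_{u=0}^d\Tr(U_{G^{2m_0u}})$, and rather than unwinding~(\ref{eq:symUDef}) into Gauss sums it quotes the closed formula $\Tr(U_S)=l(\Tr S-2)$ from ref.~\cite{Appleby}, reducing everything to the sign identity $l\bigl((\eta^{rm_0u}-\eta^{-rm_0u})^2\bigr)=(-1)^{u+1}$.

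There is one genuine error and one avoidable detour. In Part~2 your claim that the norm $N(\lambda')=\prod_{k=0}^{2m_0-1}g_\Delta^k(\lambda')$ is \emph{rational} is wrong: $g_\Delta$ need not generate $\mathrm{Gal}(\fd{Q}(\omega)/\fd{Q})$, so the norm is only guaranteed to lie in $\fd{Q}(\omega)^{\la g_\Delta\ra}$. The paper (Lemma~\ref{lm:UG2m0Criterion}) sidesteps this entirely using the antiunitary you already identified: from $U_G^{m_0}|\phi\ra=\kappa|\phi\ra$ and the standard fact that an antiunitary eigenvalue is a phase, one gets $U_G^{2m_0}|\phi\ra=|\phi\ra$ in one line. (Your argument can be repaired, since $N(\lambda')=\kappa\,g_\Delta^{m_0}(\kappa)=\kappa\bar\kappa>0$ forces $N(\lambda')=+1$; but ``$-1$ is not a norm'' as a bare number-theoretic assertion is neither obvious nor what is actually needed.) For Part~4 your commutation-plus-Gauss-sum plan is far harder than necessary and the Gauss-sum step is not well specified without an explicit $|\psi\ra$. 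The paper instead notes that parity is literally a power of $U_G^{2m_0}$: since $G^{d+1}=\Delta I$ and $\Delta^{m_0}=-1$ one has $G^{m_0(d+1)}=-I$, so $U_{-I}=(U_G^{2m_0})^{(d+1)/2}$; combined with the metaplectic identity $A=(-1)^{(p-1)/2}U_{-I}$ this gives $A|\psi\ra=(-1)^{(p-1)/2}|\psi\ra$ immediately.
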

\begin{remark}
The vector $|\psi\ra$ cannot be assumed to be normalized (since it may happen that the normalization constant is not in the field $\fd{Q}(\omega)$, and since, even if it is in the field, the normalized vector may not have eigenvalue $1$).

The fact that the eigenvectors of $U_G$ are also eigenvectors of $U^{2m_0}_G$ with eigenvalue 1 is important as it provides us with a means of calculating them.

As a special case of this theorem, the MUB cycling anti-unitaries, whose existence in dimension $d=3\mod 4$ was established in ref. \cite{Appleby}, all have eigenvectors which are unique up to scalar multiplication.

In the fourth statement of the theorem, the term ``parity operator'' refers to the unitary
\be
A = \sum_{x} |-x\ra \la x |
\ee
which already appeared in eq. (\ref{ppA}). It follows from eq. (\ref{1}) and Lemma 1 in ref. \cite{Appleby} that
\ea{
A &= (-1)^{\frac{p-1}{2}} U_P
& &\text{where} &
P &= \bmt -1 & 0 \\ 0 & -1\emt \ .
\label{eq:ParityTermsP}
}
\end{remark}
\begin{proof}
Immediate consequence of Lemmas \ref{lm:UG2m0Criterion}--\ref{lm:parity} proved below.
\end{proof}
In the remainder of this section it will always be assumed that the exponent 
$n$ is odd, and that $G$ is a fixed element of $\GL_p(2,\fd{F}_d)$ with eigenvalues $\eta^r$ and $\eta^{rd}$ (as in Lemma \ref{lem2} and Theorem \ref{thm2})
where $r$ is co-prime to $d+1$ so that $G$ is a MUB-cycler. We will always write $t=\Tr(G)$ and $\Delta = \det(G)$.
From Theorem \ref{thm1} we know
\be
G^{d+1}=\Delta I \ .
\label{eq:GpowerdPlus1}
\ee
So, if we define the multiplicative order of $\Delta$ to be the smallest positive integer $m$ such that $\Delta^m=1$, this is the same as the smallest positive integer $m$ such that $U^m_G$ is a unitary. We have that $\Delta^m =1$ if and only if $mr(d^2-1)/(p-1)$ is a multiple of $d^2-1$, which, in turn, is true if and only if $mr$ is  a multiple of $p-1$.  Since $r$  is odd this means that the multiplicative order of $\Delta$ must be even.  We will therefore denote it $2m_0$.
\begin{lemma}
\label{lm:UG2m0Criterion}
With notations and definitions as above, suppose $|\phi\ra\in V_{\rm c}$ is an eigenvector of $U_G$ so that
\be
U_G |\phi\ra = \lambda |\phi\ra
\label{eq:uglambda}
\ee
for some $\lambda\in \fd{Q}(\omega)$.  Then
\be
U^{2m_0}_G |\phi\ra = |\phi \ra \ .
\ee
\end{lemma}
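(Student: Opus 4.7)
The plan is to exploit an algebraic coincidence: $U_G^{m_0}$ is not just a $g$-unitary, it is an honest anti-unitary, so that the identity (\ref{eq:antiUnitarySquare}) already in the paper does almost all the work.

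The starting observation is that $\Delta$ has multiplicative order $2m_0$ in the cyclic group $\fd{F}_p^*$, and hence $\Delta^{m_0}$ is the unique element of order two in $\fd{F}_p^*$, namely $-1$.  Therefore the Galois automorphism associated with $U_G^{m_0}=U_{G^{m_0}}$ is $g_{\Delta^{m_0}}=g_{-1}$, i.e.\ complex conjugation, so $U_G^{m_0}$ is an anti-unitary.

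Next, iterating the eigenvalue equation and using the $g$-linearity $U_G(c|\psi\ra)=g_{\Delta}(c)\,U_G|\psi\ra$ for $c\in\fd{Q}(\omega)$, a straightforward induction gives
\be
U_G^{m}|\phi\ra = \nu_m |\phi\ra, \qquad \nu_m = \prod_{k=0}^{m-1} g_{\Delta}^k(\lambda) \in \fd{Q}(\omega),
\ee
for every $m\ge 1$.  Specializing to $m=m_0$ and then applying $U_G^{m_0}$ once more, now using its anti-unitarity, yields
\be
U_G^{2m_0}|\phi\ra = U_G^{m_0}(\nu_{m_0}|\phi\ra) = g_{-1}(\nu_{m_0})\,U_G^{m_0}|\phi\ra = |\nu_{m_0}|^2 |\phi\ra .
\ee
So $|\phi\ra$ is an eigenvector of $U_G^{2m_0}$ with a non-negative real eigenvalue.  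Since $U_G^{2m_0}$ is a genuine unitary (its Galois factor $g_{\Delta^{2m_0}}$ is the identity), that eigenvalue must also have absolute value one, and the only non-negative real number of absolute value one is $1$.

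The one step that really demands thought is the first.  Without the anti-unitary observation, the best one can do directly is to deduce that the eigenvalue of $U_G^{2m_0}$ at $|\phi\ra$ lies in $\{\pm 1\}$, by combining unitarity of $U_G^{2m_0}$ with the relation $U_G^{2m_0(d+1)}=U_I=I$ and the fact that the only roots of unity in $\fd{Q}(\omega)$ are the $2p$-th roots; ruling out $-1$ then requires an extra idea.  Recognising that $\Delta^{m_0}=-1$ supplies that idea immediately, by making the eigenvalue manifestly non-negative.
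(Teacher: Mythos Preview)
Your proof is correct and follows essentially the same route as the paper: both arguments hinge on the observation that $\Delta^{m_0}=-1$ makes $U_G^{m_0}$ an honest anti-unitary, then iterate the eigenvalue relation and invoke the fact (Eq.~(\ref{eq:antiUnitarySquare})) that the square of an anti-unitary fixes any of its eigenvectors. Your version spells out the final step (eigenvalue $|\nu_{m_0}|^2$ is non-negative and of modulus one) a bit more explicitly, but the substance is identical.
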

\begin{proof}
The fact that $\Delta^{2m_0}= 1$ implies $\Delta^{m_0} = \pm 1$.  Since $2m_0$ is the smallest positive integer $m$ such that $\Delta^m = 1$ we must in fact have $\Delta^{m_0}=-1$.  This means that $U^{m_0}_G$ is an anti-unitary.  Repeatedly applying $U_G$ to Eq.~(\ref{eq:uglambda}) gives
\be
U_G^{m_0} |\psi\ra = \kappa |\psi\ra
\ee
where $\kappa = \lambda g^{\vphantom{m_0}}_{\Delta} (\lambda) \dots g^{m_0-1}_{\Delta}(\lambda)$.  It then follows from the discussion at the beginning of this section [see Eq.~(\ref{eq:antiUnitarySquare})] that 
\be
U^{2m_0}_G |\psi\ra = |\psi\ra \ .
\ee
\end{proof}
\begin{lemma}
\label{lm:SjDim}
With notations and definitions as above, let $\mathcal{S}_{1}$ be the eigenspace of $U^{2m_0}_G$ corresponding to the eigenvalue $1$.  Then $\dim(\mathcal{S}_1)=1$.
\end{lemma}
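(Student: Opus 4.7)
To organize the proof I would proceed in three stages. First, observe that $V := U_G^{2m_0}$ is actually a genuine unitary equal to $U_{G^{2m_0}}$: by Theorem 1 the representation is faithful, and $\det(G^{2m_0}) = \Delta^{2m_0} = 1$, so $G^{2m_0} \in \SL(2,\fd{F}_d)$. Counting orders via the eigenvalues $\eta^r,\eta^{rd}$ of $G$ (Lemma~\ref{lem2}), $G$ has order $2m_0(d+1)$ in $\GL_p(2,\fd{F}_d)$, hence $G^{2m_0}$ has order exactly $d+1$ in $\SL$, and the spectrum of $V$ therefore lies among the $(d+1)$-th roots of unity. Minimality of $2m_0$ forces $\Delta^{m_0} = -1$, so using \eqref{eq:ParityTermsP} we obtain the decisive identity
\[
V^{(d+1)/2} \;=\; U_{G^{(d+1)m_0}} \;=\; U_{\Delta^{m_0} I} \;=\; U_{-I} \;=\; (-1)^{(p-1)/2}\,A.
\]

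Second, because $-I$ is central in $\GL_p(2,\fd{F}_d)$, $V$ commutes with $A$; I would use this to decompose $\fd{C}^d = \mathcal{H}_+\oplus\mathcal{H}_-$ into the $\pm 1$-eigenspaces of the parity operator, of dimensions $(d+1)/2$ and $(d-1)/2$. Restricted to $\mathcal{H}_\epsilon$ the displayed identity reads $V^{(d+1)/2}|_{\mathcal{H}_\epsilon} = \epsilon(-1)^{(p-1)/2}I$, so the 1-eigenspace $\mathcal{S}_1$ sits entirely inside the unique sector $\mathcal{H}_{\epsilon_0}$ with $\epsilon_0 = (-1)^{(p-1)/2}$, and on that sector $V$ has order dividing $(d+1)/2$. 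This already yields the a priori bound $\dim\mathcal{S}_1 \le \dim\mathcal{H}_{\epsilon_0}$.

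Third, to pin down $\dim\mathcal{S}_1 = 1$ I would invoke the orthogonality formula for the cyclic group $\langle V\rangle$ of order $d+1$,
\[
\dim\mathcal{S}_1 \;=\; \frac{1}{d+1}\sum_{j=0}^{d}\Tr(V^j)
\;=\; \frac{1}{d+1}\Bigl(d + \sum_{j=1}^{d}\Tr(U_{G^{2m_0 j}})\Bigr).
\]
For each $1\le j\le d$, $G^{2m_0 j}$ is a non-identity type-2 element of $\SL(2,\fd{F}_d)$, $\SL$-conjugate via \eqref{defcanonical} to the companion matrix with $\Delta = 1$ and trace $t_j = \eta^{2m_0 rj}+\eta^{2m_0 rjd}$. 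Substituting into \eqref{eq:symUDef} and evaluating the resulting Gauss sum over $\fd{F}_d$ gives
\[
\Tr(U_{G^{2m_0 j}}) \;=\; c_d\, l\!\left(\tfrac{2-t_j}{2}\right), \qquad c_d \;=\; \frac{\tilde l(1)\,G_d}{\sqrt{d}},\quad G_d = \sum_{x\in\fd{F}_d}\omega^{\tr(x^2)},
\]
with $l$ the quadratic character of $\fd{F}_d$ and $c_d$ a universal $d$-dependent sign. The remaining sum $\sum_{j=1}^{d} l((2-t_j)/2)$ is a quadratic-character sum over the non-split torus $\langle G^{2m_0}\rangle$. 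The main obstacle is its delicate evaluation: it is in effect the classical statement that the metaplectic representation restricted to a non-split torus is multiplicity-free with exactly one missing character, and one must carefully track signs to verify that this missing character is \emph{not} the trivial one (otherwise we would obtain $\dim\mathcal{S}_1 = 0$). Granted that check, the total trace sum equals $d+1$, giving $\dim\mathcal{S}_1 = 1$ as claimed.
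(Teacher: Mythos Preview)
Your stage~3 strategy coincides with the paper's: both compute $\dim\mathcal{S}_1$ via the orthogonality sum $\tfrac{1}{d+1}\sum_{u}\Tr(U_{G^{2m_0 u}})$. The difference lies in how the sum is actually evaluated.

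Stages~1--2 are not used in the paper's argument for this lemma. The parity identity $V^{(d+1)/2}=(-1)^{(p-1)/2}A$ is correct and is essentially what the paper proves separately as Lemma~\ref{lm:parity}, but here it only yields the crude bound $\dim\mathcal{S}_1\le (d+1)/2$; it does not pin the dimension down to~$1$, so this material is extraneous to the present proof.

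The genuine gap is in your stage~3: you reduce to a quadratic-character sum over the non-split torus, invoke the classical multiplicity-freeness of the Weil representation on that torus, and then explicitly flag that you have \emph{not} carried out the sign check determining which character is missing. That check is the whole point. The paper avoids the abstract appeal and does the evaluation by hand, as follows. From $\Delta^{2m_0}=1$ one has $\eta^{2drm_0}=\eta^{-2rm_0}$, whence
\[
\Tr(G^{2m_0u})-2 \;=\; \eta^{2rm_0u}+\eta^{-2rm_0u}-2 \;=\; \bigl(\eta^{rm_0u}-\eta^{-rm_0u}\bigr)^{2}.
\]
One checks this equals~$0$ only for $u=0$, and for $u\neq 0$ the (slightly sharpened) trace formula $\Tr(U_S)=l(t-2)$ gives $\Tr(U_{G^{2m_0u}})=l\bigl((\eta^{rm_0u}-\eta^{-rm_0u})^{2}\bigr)$. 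Since $\Delta^{m_0}=-1$ means $\eta^{drm_0}=-\eta^{-rm_0}$, applying Frobenius yields
\[
\bigl(\eta^{rm_0u}-\eta^{-rm_0u}\bigr)^{d}=(-1)^{u+1}\bigl(\eta^{rm_0u}-\eta^{-rm_0u}\bigr),
\]
so the quadratic character is simply $(-1)^{u+1}$. As $d$ is odd, $\sum_{u=1}^{d}(-1)^{u+1}=1$, and the total is $(d+1)/(d+1)=1$. This elementary computation is exactly the ``delicate evaluation'' you left open, and it requires no detour through Gauss sums or general facts about metaplectic representations.
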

\begin{proof}
Taking account of the fact that the representation of $\SL(2,\fd{F}_d)$ is faithful, the projector onto $\mathcal{S}_1$ is 
\be
P_1 = \frac{1}{d+1} \sum_{u=0}^{d}U_{G^{2m_0u}} \ ,
\label{eq:musProj}
\ee
implying
\be
\dim S_{1} = \frac{1}{d+1}\sum_{u=0}^{d}  \Tr\bigl(U_{G^{2m_0u}}\bigr) \ .
\ee

We will use a slightly improved version of Theorem 5 of ref. \cite{Appleby} to evaluate this expression. The original theorem states that given $S \in \SL(2,\fd{F}_d)$ of the form 
\be S = \bmt \alpha & \beta \\ \gamma & \delta \emt \ee 
with determinant 1 and trace $t \ne 2$, then 
\be \Tr (U_S) = 
\begin{cases}
l(t-2) \qquad &\beta \ne 0 \\
l(\alpha) 	\qquad &\beta = 0 \ .
\end{cases}
\ee
We improve this statement by showing that for $t \ne 2$, $\Tr (U_S) = l(t-2)$ irrespective of whether $\beta = 0$. Indeed, consider the case $\beta = 0$, when $S$ takes the form
\be S = \bmt \alpha & 0 \\ \gamma & -\alpha^{-1} \emt \ .\ee 
We have $t = \alpha + \alpha^{-1}$ and $l(t-2) = l\left((\alpha^{1/2} - \alpha^{-1/2})^2\right)$. Note that
\be (\alpha^{1/2} - \alpha^{-1/2})^d = \alpha^{d/2} - \alpha^{-d/2} = 
\begin{cases}
\alpha^{1/2} - \alpha^{-1/2} \quad & \alpha^{1/2} \in \fd{F}_d \\
\alpha^{-1/2} - \alpha^{1/2} \quad & \alpha^{1/2} \notin \fd{F}_d \ ,
\end{cases}
\ee
which means $\alpha^{1/2} - \alpha^{-1/2} \in \fd{F}_d \ $  if and only if $\alpha^{1/2} \in \fd{F}_d$, and consequently $l(\alpha) = l(t-2)$, as we wish to show.

Back to evaluating $\Tr(G^{2m_0u})$, since $G$ has eigenvalues $\eta^r$ and $\eta^{dr}$, we have
\ea{
\Tr(G^{2m_0 u}) &= \eta^{2rm_0u} +\eta^{2drm_0u}
\nonumber
\\
& = \eta^{2rm_0u} + \eta^{-2rm_0u}
\nonumber
\\
&= \left(\eta^{rm_0u}-\eta^{-rm_0u}\right)^2 +2
}
where in the penultimate step we used the fact that $\eta^{2drm_0} = \eta^{-2rm_0}$ (as follows from the fact that $\Delta^{2m_0} = 1$).   We see from this that $\Tr(G^{2m_0u})=2$ if and only if $2rm_0u$ is a multiple of $(p-1)(d+1)$.  Since $2m_0$ is the order of an element in a group of order $p-1$ it must be a divisor of $p-1$.  So the condition is equivalent to the statement that $ru$ is a multiple of $(d+1)(p-1)/2m_0$, which, in view of the fact that $r$ is co-prime to $d+1$, is in turn equivalent to the statement that $u$ is a multiple of $(d+1)$.  Since $u$ is in the range $0\le u \le d$ this means $\Tr(G^{m_0u})=2$ if and only if $u=0$. Applying the improved result from Theorem 5 of ref. \cite{Appleby} we find
\be
  \Tr\bigl(U_{G^{2m_0u}}\bigr)
 =
 \begin{cases}
 d \qquad & u=0 \\
 l\left((\eta^{rm_0u}-\eta^{-rm_0u})^2 \right)\qquad & 1 \le u \le d
 \end{cases}
 \label{eq:trCalcInter}
\ee
To evaluate the quadratic characters we again appeal to the fact that $\Delta^{m_0} = -1$, which implies $\eta^{drm_0} = -\eta^{-rm_0}$, so that
\ea{
\left(\eta^{rm_0u}-\eta^{-rm_0u}\right)^d
&= \eta^{drm_0u} - \eta^{-drm_0u}
\nonumber
\\
&=(-1)^{u+1}(\eta^{rm_0u}-\eta^{-rm_0u}) \ .
}
So for $1 \le u \le d$, we have
\be
 l\left((\eta^{rm_0u}-\eta^{-rm_0u})^2 \right) = (-1)^{u+1} 
\ee
and, consequently,
\ea{
\dim \mathcal{S}_1 &= \frac{1}{d+1}\left(d + \sum_{u=1}^d (-1)^{u+1}\right)
= 1 \ .
}
\end{proof}

\begin{lemma}
\label{lm:psiExist}
With definitions and notations as above there exists a non-zero vector $|\psi\ra \in V_{\rm c}$ such that
\be
U_G |\psi\ra = |\psi\ra \ .
\ee
\end{lemma}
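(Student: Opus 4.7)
The plan is to produce $|\psi\rangle$ by first extracting a non-zero vector from $\mathcal{S}_1 \cap V_{\rm c}$ and then rescaling it by a suitable element of $\fd{Q}(\omega)$. Since each $U_{G^{2m_0u}}$ appearing in the projector $P_1$ of eq.~(\ref{eq:musProj}) is an ordinary symplectic unitary with entries in $\fd{Q}(\omega)$, the projector $P_1$ preserves $V_{\rm c}$. By Lemma \ref{lm:SjDim} it has rank one, so some standard basis vector $|v\rangle$ satisfies $P_1|v\rangle \ne 0$, and $|\phi\rangle := P_1|v\rangle$ is a non-zero element of $\mathcal{S}_1 \cap V_{\rm c}$.

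Because $U_G$ commutes with its own power $U_G^{2m_0}$ and preserves $V_{\rm c}$, the image $U_G|\phi\rangle$ also lies in $\mathcal{S}_1 \cap V_{\rm c}$. This intersection is a one-dimensional $\fd{Q}(\omega)$-module spanned by $|\phi\rangle$, so $U_G|\phi\rangle = \lambda|\phi\rangle$ for some $\lambda \in \fd{Q}(\omega)^{*}$. Iterating and exploiting $g$-linearity gives $U_G^k|\phi\rangle = \Lambda_k |\phi\rangle$ with $\Lambda_k := \prod_{j=0}^{k-1} g^j_\Delta(\lambda)$, and the identity $U_G^{2m_0}|\phi\rangle = |\phi\rangle$ yields the key relation $\Lambda_{2m_0} = 1$.

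It remains to find $\mu \in \fd{Q}(\omega)^{*}$ satisfying $g_\Delta(\mu) = \mu/\lambda$; granted this, $|\psi\rangle := \mu|\phi\rangle$ will obey $U_G|\psi\rangle = g_\Delta(\mu)\lambda|\phi\rangle = |\psi\rangle$, completing the proof. I would build $\mu$ by a Hilbert 90 type averaging. Let $N$ be the order of $g_\Delta$ in $\mathrm{Gal}(\fd{Q}(\omega)/\fd{Q})$ and set
\be
\mu := \sum_{k=0}^{N-1} \Lambda_k\, g^k_\Delta(\mu_0)
\ee
for a free parameter $\mu_0 \in \fd{Q}(\omega)$. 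A short computation using the recursion $g_\Delta(\Lambda_k) = \Lambda_{k+1}/\lambda$ together with $\Lambda_N = 1$ and $g^N_\Delta = \mathrm{id}$ confirms that $g_\Delta(\mu) = \mu/\lambda$ irrespective of $\mu_0$, and Dedekind's theorem on the linear independence of distinct Galois automorphisms supplies a choice of $\mu_0$ making $\mu \ne 0$ (the $\Lambda_k$ are non-zero).

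The main obstacle is the identity $\Lambda_N = 1$. I would establish it by noting that the reduction map $(\fd{Z}/d\fd{Z})^{*} \to \fd{F}_p^{*}$ sends $\Delta$ to itself, so its order $N$ in the former group is a multiple of its order $2m_0$ in the latter. Writing $N = 2m_0 s$ and grouping the factors of $\Lambda_N$ into $s$ blocks of length $2m_0$, each block equals $g^{2m_0 i}_\Delta(\Lambda_{2m_0}) = g^{2m_0 i}_\Delta(1) = 1$, whence $\Lambda_N = 1$ as required.
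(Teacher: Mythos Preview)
Your proof is correct and follows essentially the same route as the paper: obtain a cyclotomic eigenvector $|\phi\rangle$ of $U_G^{2m_0}$ with eigenvalue $1$, use one-dimensionality of $\mathcal{S}_1$ to get $U_G|\phi\rangle=\lambda|\phi\rangle$, and then run a Hilbert~90 averaging argument (with Dedekind independence) to rescale by some $\mu$ so that $\mu|\phi\rangle$ is fixed. The only notable difference is that you sum over the full order $N$ of $g_\Delta$ in $\mathrm{Gal}(\fd{Q}(\omega)/\fd{Q})$ and then work to establish $\Lambda_N=1$; in the paper's setting $\omega$ is a primitive $p$-th root of unity, so the Galois group is $(\fd{Z}/p\fd{Z})^*\cong\fd{F}_p^*$ and hence $N=2m_0$ automatically, making your final paragraph unnecessary (though harmless).
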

\begin{proof}
We know from Lemma \ref{lm:SjDim} that $\det(U_{G^{2m_0}}-I) = 0$.  Since the matrix elements of $U_{G^{2m_0}}$ are in the field $\fd{Q}(\omega)$ we have, by a basic fact of linear algebra~\cite{Greub}, that there exists a vector $|\phi\ra \in V_{\rm c}$ such that $U_{G^{2m_0}}|\phi\ra = |\phi\ra$.  Since 
\be
U_{G^{2m_0}} U_{G} |\phi \ra = U_G |\phi\ra
\ee
and since $\mathcal{S}_1$ is one dimensional, we must have
\be
U_G |\phi\ra = \lambda |\phi\ra
\ee
for some $\lambda \in\fd{Q}(\omega)$ such that
\be
\lambda g_{\Delta} (\lambda) \dots g^{2m_0-1}_{\Delta} (\lambda)= 1.
\ee
It now follows from a variant  of the proof of Hilbert's theorem 90 (see, for example, refs. \cite{Milne} or~\cite{Roman} [we cannot  make a direct application of the theorem since $g_{\Delta}$ may not generate the full Galois group]) that there exists $\mu \in \fd{Q}(\omega)$ such that
\be
\lambda = \frac{\mu}{g_{\Delta}(\mu)} \ .
\label{eq:lammugmu}
\ee
In fact, define a mapping $T\colon \fd{Q}(\omega) \to \fd{Q}(\omega)$ by 
\ea{
T(x) &= \sum_{j=0}^{2m_0-1} a_j  g^j_{\Delta}(x)  
}
where
\ea{
a_0 &= 1, & a_1 &= \lambda, & a_2 &= \lambda g_{\Delta}(\lambda), & &\dots, & a_{2m_0-1} &= \lambda g^{\vphantom{m_0}}_{\Delta}(\lambda) \dots g^{2m_0-2}_{\Delta}(\lambda) \ .
}
We have
\be
\lambda g_{\Delta}(T(x)) = T(x)
\ee
for all $x$.  
Since the coefficients $a_j$ are non-zero it follows from the Dedekind independence theorem (see, for example, refs.~\cite{Milne} or~\cite{Roman}) that there exists $x\in \fd{Q}(\omega)$ such that $T(x) \neq 0$. Choose such a value of $x$ and set $\mu = T(x)$.  Eq.~(\ref{eq:lammugmu}) then follows. If we now define $|\psi\ra = \mu |\phi\ra$ we will have
\be
U_G |\psi\ra = |\psi\ra \ .
\ee
\end{proof}
\begin{lemma}
\label{lm:parity}
With definitions and notations as above, let $|\phi\ra$ be such that $U_G |\phi\ra = \lambda |\phi\ra$ for some $\lambda$.  Then $|\phi\ra$ is an eigenvector of the parity operator, with eigenvalue $(-1)^{\frac{p-1}{2}}$.
\end{lemma}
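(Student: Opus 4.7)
The idea is to use the faithfulness of the representation, together with the fact that $P=-I$ appears as a specific power of $G$. Once we have written $U_P$ as a power of $U_G$, the already-proved relation $U_G^{2m_0}|\phi\rangle=|\phi\rangle$ will force $U_P|\phi\rangle=|\phi\rangle$, and then $A=(-1)^{(p-1)/2}U_P$ gives the result.

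First I would observe that since $G$ is a MUB-cycler, Theorem \ref{thm1} yields $G^{d+1}=\Delta I$, hence
\be
G^{m_0(d+1)}=\Delta^{m_0} I\ .
\ee
As noted in the proof of Lemma \ref{lm:UG2m0Criterion}, the definition of $m_0$ as half the multiplicative order of $\Delta$ forces $\Delta^{m_0}=-1$. Therefore $G^{m_0(d+1)}=-I=P$, and by the faithfulness of the representation (Theorem 1) we obtain the key identity $U_P=U_G^{m_0(d+1)}$.

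Next, write $m_0(d+1)=2m_0\cdot\tfrac{d+1}{2}$, which is a legitimate integer factorisation since $d$ is odd. The operator $U_G^{2m_0}$ is a genuine unitary (its Galois part is $g_\Delta^{2m_0}=g_{\Delta^{2m_0}}=\mathrm{id}$), so from $U_G^{2m_0}|\phi\rangle=|\phi\rangle$ (a consequence of Lemma \ref{lm:UG2m0Criterion} applied to the eigenvector $|\phi\rangle$) we can iterate $(d+1)/2$ times to get
\be
U_P|\phi\rangle = U_G^{m_0(d+1)}|\phi\rangle = \bigl(U_G^{2m_0}\bigr)^{(d+1)/2}|\phi\rangle = |\phi\rangle\ .
\ee
Combining with the identity $A=(-1)^{(p-1)/2}U_P$ from Eq.~(\ref{eq:ParityTermsP}) yields the desired conclusion
\be
A|\phi\rangle = (-1)^{(p-1)/2}|\phi\rangle\ .
\ee

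The only subtlety to verify carefully is the commutation of $g_\Delta$ with the relevant scalars when iterating $U_G^{2m_0}|\phi\rangle=|\phi\rangle$ out to $U_G^{m_0(d+1)}|\phi\rangle$, but since $U_G^{2m_0}$ acts as an honest linear operator (no Galois twist remains) the iteration is immediate. I do not anticipate a serious obstacle: the entire argument is really just the observation that $P$ is an explicit power of $G$ and that the order of that power is a multiple of $2m_0$.
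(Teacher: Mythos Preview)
Your proof is correct and follows essentially the same route as the paper: both use $G^{m_0(d+1)}=-I=P$ (from $G^{d+1}=\Delta I$ and $\Delta^{m_0}=-1$), rewrite $U_P=(U_G^{2m_0})^{(d+1)/2}$, and then invoke Lemma~\ref{lm:UG2m0Criterion} together with $A=(-1)^{(p-1)/2}U_P$. Your explicit remarks about faithfulness and about $U_G^{2m_0}$ being a genuine unitary (so that iteration is unproblematic) are good clarifications, but the argument is the same as the paper's.
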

\begin{proof}
It follows from Lemmas \ref{lm:UG2m0Criterion} and \ref{lm:SjDim} that 
\be
U_{G}^{2m_0} |\phi\ra = |\phi\ra \ .
\ee
Since $\Delta^{2m_0} = 1$, and since $2m_0$ is the smallest positive integer for which that is the case, we must have $\Delta^{m_0}=-1$.  In view of Eqs. (\ref{eq:ParityTermsP}) and (\ref{eq:GpowerdPlus1}) this means
\ea{
A |\phi\ra& = (-1)^{\frac{p-1}{2}} \left( U_G^{d+1}\right)^{m_0} |\phi\ra
\nonumber
\\
& = (-1)^{\frac{p-1}{2}} \left(U_{G}^{2m_0}\right)^{\frac{d+1}{2}} |\phi\ra
\nonumber
\\
& =(-1)^{\frac{p-1}{2}} |\phi\ra \ .
} 
\end{proof}

We conclude this section by observing that since every cycling $g$-unitary has exactly one eigenvector up to scalar multiplication, then $|\psi\ra$ is an eigenvector of $U_{G^r_0}$ if and only if it is an eigenvector of $U_{G_0}$ (where $G_0$ is the matrix defined by Eq. (\ref{eq:G0Def})). In view of the discussion at the end of Section \ref{sec:MUBcyclers} this means that the vectors which are eigenvectors of a cycling $g$-unitary form a single orbit under the extended Clifford group.
\section{MUB-balanced states}
We saw in the last section that if $d$ is an odd power of $p$  then every MUB-cycling $g$-unitary has an eigenvector.  We now show that if, in addition, $d=3\mod 4$ then these eigenvectors are ``rotationally invariant'' or ``MUB-balanced'' states \cite{Sussman,Sussmanthesis,ASSW} and, \emph{a fortiori}, minimum uncertainty states \cite{Sussman,ADF,Sussmanthesis,ASSW}.  The question as to what happens when $d=1\mod 4$ remains open.

Given a MUB, a MUB-balanced state is one for which the probabilities with respect to each basis are permutations of each other.  In other words a normalized state $|\psi\ra$ is MUB-balanced if and only if for all $z$ there is a permutation $f_z$ such that
\be
p_j^{(z)} = p^0_{f_z(j)}
\ee
where
\be
p^{(z)}_j = \bigl| \la e^{(z)}_j | \psi \ra \bigr|^2 
\ee
This definition was introduced by Wootters and Sussman \cite{Sussman}, who also showed that such states exist in every even prime power dimension.  Wootters and Sussman went on to show that MUB-balanced states are minimum uncertainty states.  Since it is  central to this section it is worth summarizing their argument.  Let
\be
H_z = -\log_2 \left(\sum_j (p^{(z)}_j)^2\right)
\ee
be the quadratic R\'{e}nyi entropy in basis $z$, and let $T=\sum_z H_z$ be the total entropy.  Then can be shown that $T$ satisfies the inequality
\be
T \ge (d+1)\log_2\bigl(\frac{d+1}{2}\bigr)
\ee
A minimum uncertainty state is one for which the bound is saturated.  The necessary and sufficient condition for that to be true is that
\be
\sum_j (p^{(z)}_j)^2 = \frac{2}{d+1}
\label{eq:MUScond}
\ee
In a MUB-balanced state the fact that probabilities in each basis are the same up to permutation in every basis means that the sum  $\sum_j (p^z_j)^2$ is independent of $z$.  In view of the identity
\be
\sum_{j,z} (p^{(z)}_j)^2 = 2
\ee
it follows that Eq. (\ref{eq:MUScond}) is satisfied, and that the state is consequently a minimum uncertainty state.

Our result is described by the following theorem
\begin{theorem}
Suppose $d=3 \mod 4$, and suppose $G\in\GL_p(2,\fd{F})$ is a MUB-cycler.  Let $2m_0$ be the integer defined in Theorem 1 of Section 9, and let $|\phi\ra$ be a normalized eigenvector of the unitary $U^{2m_0}_G$ with eigenvalue 1.  Then $|\phi\ra$ is  MUB-balanced.
\end{theorem}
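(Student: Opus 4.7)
The plan is to show that, under the hypothesis $d = 3 \mod 4$, the vector $|\phi\ra$ is (up to an overall phase) fixed by an anti-unitary MUB-cycler; the MUB-balanced property will then follow at once from the standard Wigner-type identity for anti-unitaries. Since $d = 3 \mod 4$ with $n$ odd forces $p = 3 \mod 4$, the analysis concluding Section~\ref{sec:MUBcyclers} furnishes the matrix $A_0 := G_0^{(p-1)/2}$, which is both a MUB-cycler and anti-symplectic, so $U_{A_0}$ is an anti-unitary cycling through all $d+1$ MU bases. This is the operator we will use.

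First I reduce to a canonical form. By equation (\ref{defcanonical}) every MUB-cycler $G$ may be written as $G = S G_0^r S^{-1}$ with $S \in \SL(2,\fd{F}_d)$ and $r$ co-prime to $d+1$; by the faithful representation established in Section~6, $U_G = U_S U_{G_0^r} U_S^{-1}$, so $U_S^{-1}|\phi\ra$ is a normalized eigenvector of $U_{G_0^r}^{2m_0}$ with eigenvalue $1$. Since the Clifford unitary $U_S$ merely permutes MU bases (and permutes the vectors within each basis up to phases), MUB-balancedness is invariant under its action and we may assume $G = G_0^r$. Theorem~\ref{thm:gUnitEigenvectors}(3) then identifies $|\phi\ra$ with a scalar multiple of the vector $|\psi\ra$ of Lemma~\ref{lm:psiExist}, which satisfies $U_{G_0^r}|\psi\ra = |\psi\ra$. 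By the observation concluding Section~\ref{sec:MUBcyclers}, every cycling g-unitary $U_{G_0^k}$ shares the same eigenvector up to scalar; in particular $|\psi\ra$, and hence $|\phi\ra$, is also an eigenvector of $U_{A_0}$, with some eigenvalue $\lambda$. Anti-unitarity forces $|\lambda| = 1$, and the scalar-rescaling discussion at the start of the previous section lets us absorb $\lambda$ into $|\phi\ra$ without altering any probability $p_j^{(z)}$, so that $U_{A_0}|\phi\ra = |\phi\ra$.

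Finally, with $U_{A_0}|\phi\ra = |\phi\ra$ in hand, the defining identity $\la U_{A_0}x, U_{A_0}y\ra = \overline{\la x,y\ra}$ gives
\be
\bigl|\la U_{A_0} e_j^{(z)}, \phi\ra\bigr|^2 = \bigl|\la e_j^{(z)}, \phi\ra\bigr|^2
\ee
for every MU vector. Since $U_{A_0}$ is a MUB-cycler, $U_{A_0}|e_j^{(z)}\ra$ equals a unit-modulus scalar times $|e_{f_z(j)}^{(\sigma(z))}\ra$, where $\sigma$ is the cyclic permutation of the $d+1$ basis labels of order $d+1$ induced by $A_0$ via its M\"obius action and $f_z$ is some permutation of $\fd{F}_d$. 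Consequently $p_j^{(z)} = p_{f_z(j)}^{(\sigma(z))}$, and iterating the generator $\sigma$ shows that the multiset $\{p_j^{(z)}\}_j$ is independent of $z$, which is exactly the MUB-balanced condition. The only truly delicate step is the identification in the second paragraph of $|\phi\ra$ with an eigenvector of the anti-unitary $U_{A_0}$: this rests on the uniqueness of eigenvectors of cycling g-unitaries from Theorem~\ref{thm:gUnitEigenvectors}(3) combined with the remark concluding Section~\ref{sec:MUBcyclers}, and is precisely the place where the hypothesis $d = 3 \mod 4$ is used.
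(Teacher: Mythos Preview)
Your proof is correct and takes a genuinely different, more economical route than the paper's. The paper proceeds by working directly with the general MUB-cycler $G$ and analyzing how the Galois automorphism $g_\Delta$ acts on the projector $P_1$: the key computation is that, since $d = 3 \mod 4$ makes $-1$ a quadratic non-residue and since $\Delta$ is also a non-residue, $-\Delta$ is a square, say $-\Delta = x^2$; one then checks that $g_\Delta(U_F) = U_S U_F^{*} U_S^{-1}$ for an explicit diagonal symplectic $S$, deduces $g_\Delta(P_1) = U_S P_1^{*}U_S^{-1}$, and from this extracts the relation $g_\Delta(\la e^0_j|\psi\ra) = \mu (\la e^0_{x^{-1}j}|\psi\ra)^{*}$, which after iteration and a normalization argument gives the MUB-balanced property. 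Your approach instead invokes the single-orbit observation at the end of Section~10 to pass from the eigenvector of $U_{G_0^r}$ to an eigenvector of the anti-unitary $U_{G_0^{(p-1)/2}}$, after which the result is immediate from the Wigner identity for anti-unitaries.

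The paper's authors are aware of your shortcut: the Remark preceding their proof says explicitly that one ``could restrict our attention to the eigenstates of MUB-cycling anti-unitaries. This would make the proof a little easier. Nevertheless, we have chosen to give the proof for the case of an arbitrary $g$-unitary because this enables one to see why it fails when $d=1\mod 4$.'' That is the trade-off: your argument is cleaner, but the paper's direct computation with $g_\Delta$ pinpoints the obstruction in the $d = 1 \mod 4$ case (namely that $-\Delta$ is then a non-square, so the crucial factorization $g_\Delta(\cdot) = U_S(\cdot)^{*}U_S^{-1}$ breaks down), whereas in your approach the obstruction is buried in the nonexistence of an anti-symplectic MUB-cycler.
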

\begin{remark}
Note that, although the $g$-unitary $U_G$ plays an essential role in the proof, a knowledge of the ordinary unitary $U_{G^{2m_0}}$ is sufficient if one only wants to calculate the state. 

Since the eigenstates of MUB-cycling $g$-unitaries form a single orbit of the extended Clifford group we could restrict our attention to the the eigenstates of MUB-cycling anti-unitaries.  This would make the proof a little easier.  Nevertheless, we have chosen to give the proof for the case of an arbitrary $g$-unitary because this enables one to see  why it fails when $d=1\mod 4$.
\end{remark}
\begin{proof}
We know from Theorem 1 of Section 9 that there exists a state  $|\psi\ra$ such that
\be
U_G |\psi\ra = |\psi \ra
\ee
 Since $U_G$ is a cycling $g$-unitary the set $\{z_0,z_1,\dots,z_d\}$ is a permutation of the full set of labels $\{0,1,\dots, d-1,\infty\}$.  We may write
\be
|e^{z_l}_j\ra = s_{l,j} \omega^{k_{l,j}} U_G^l |e^{0}_{p_l(j)} \ra
\ee
where $p_l$ is an $l$-dependent permutation, $k_{l,j}$ is an $l$ and $j$-dependent integer and $s_{l,j}$ is an $l$ and $j$-dependent sign.  So
\ea{
\la e^{z_l}_{p_l(j)} | \psi \ra
&=
s_{l,j} \omega^{-k_{l,j}} g_{\Delta}^{-l} \left(\la e^{0}_j |U^{\dagger}_j| \psi\ra\right)
\nonumber
\\
&= s_{l,j} \omega^{-k_{l,j}} g_{\Delta}^{-l} \left(\la e^{0}_j | \psi\ra\right)
}

Now let
\be
F=\bmt \alpha & \beta \\ \gamma & \delta \emt
\ee
be arbitrary.  We have
\ea{
K_{\Delta} F K_{\Delta}^{-1}
&= \bmt \alpha & \Delta^{-1} \beta \\ \Delta \gamma & \delta \emt
}
Since $\Delta =\theta^{\frac{r(d-1)}{(p-1)}}$, and since
  $r(d-1)/(p-1)$  is odd,  $\Delta$ is a quadratic non-residue.  The assumption that $d=3 \mod 4$, together with Lemma~1 of ref.~\cite{Appleby}, means that $-1$ is also a quadratic non-residue.  So  there exists $x\in \fd{F}_d$ such that $x^2 = -\Delta$ .  If we define
\be
S =\bmt x^{-1} & 0 \\ 0 & x \emt
\ee 
then
\ea{
S K_{-1} F K_{-1} S^{-1}& = K_{\Delta} F K^{-1}_{\Delta}
}
So
\ea{
g_{\Delta} (U_F) 
&= U_S U^{*}_F U^{-1}_S
}
Applying this to the projector $P_1$ defined in Eq.~(\ref{eq:musProj}) we deduce
\be
g_{\Delta} (P_1) = 
U_S P^{*}_1 U_S^{-1} 
\ee
We have
\be
\la e^{0}_{j_1} | \psi \ra \la \psi | e^{0}_{j_2}\ra
= 
\lambda \la e^{0}_{j_1} | P_1 | e^{0}_{j_2}\ra 
\ee
for some constant $\lambda$.  Consequently
\ea{
g_{\Delta}(\la e^{0}_{j_1} | \psi\ra) g_{\Delta} (\la \psi | e^{0}_{j_2} \ra)
&=\frac{g_{\Delta}(\lambda)}{\lambda^{*}}(\la e^{0}_{x^{-1}j_1} | \psi\ra)^{*} (\la \psi | e^{0}_{x^{-1}j_2} \ra)^{*}
}
which is easily seen to imply
\be
g_{\Delta} (\la e^{0}_{j}|\psi\ra) = \mu (\la e^{0}_{x^{-1}j} | \psi\ra)^{*}
\ee
for some constant $\mu$.  By repeated application of this formula we find
\be
g^{l}_{\Delta}  (\la e^{0}_{j}|\psi\ra)
=
\begin{cases}
\mu_l (\la e^{0}_{x^{-l}j} | \psi\ra)^{*}  \qquad & \text{$l$ odd}
\\
\mu_l \la e^{0}_{x^{-l}j} | \psi\ra \qquad & \text{$l$ even}
\end{cases}
\ee
where $\mu_l = g_{\Delta}^{l-1} (\mu) g_{\Delta}^{l-2}(\mu^{*}) \dots$.  Hence
\be
\bigl| \la e^{z_l}_{p_l(j)} |\psi\ra \bigr|^2 = |\mu_l|^2 \bigl| \la e^{0}_{x^{-l}j} |\psi \ra \bigr|^2
\ee
Since
\be
\sum_j \bigl| \la e^{z_l}_{j} |\psi\ra \bigr|^2  = \sum_j \bigl| \la e^{0}_{j} |\psi \ra \bigr|^2 =1
\ee
we must have $|\mu|^2=1$, from which it follows that the normalized state
\be
|\phi\ra = \frac{1}{\sqrt{\bigl\| |\psi\ra \bigr\|}} |\psi\ra
\ee
is MUB-balanced.
\end{proof}

The MUB-balanced states whose existence is established by this theorem are identical with the ones constructed using a different method by Amburg \emph{et al}~\cite{ASSW}.  In fact, the orbit of states constructed in ref.~\cite{ASSW} is generated by the state with Wigner function
\be
W_{\mbf{p}} = \frac{1}{d(d+1)} \left(1-d \delta_{\mbf{p},\boldsymbol{0}} + \sum_{x\in \fd{F}_d^{*}}l(x^2+1) \omega^{\tr[x(p_1^2+p_2^2)]}\right).
\ee 
Define
\begin{align}
F &= \bmt \alpha & \beta \\ -\beta & \alpha \emt, & \alpha &= \frac{1}{2}(\eta+\eta^d) & \beta&=\frac{i_M}{2}(\eta-\eta^d)
\end{align}
where
\be
i_M = \eta^{\frac{1}{4}(p-1)(d+1)}
\ee
(observe that $i_M^2 = -1$, so $i_M$ is a modular analogue of $i$).  It is easily seen that $\alpha^d = \alpha$ and $\beta^d=\beta$.  So $F\in \GL(2,\fd{F}_d)$. Moreover
\begin{align}
\Delta &= \det(F) = \eta^{d+1}, &  t&= \Tr(F) = \eta+\eta^d.
\end{align}
So $F$ is a MUB-cycler.  The fact that  
\be
(\alpha p_1+\beta p_2)^2 + (- \beta p_1+ \alpha p_2 )^2 =  \Delta (p_1^2+p_2^2)
\ee
means $W_{F\mbf{p}} = g_{\Delta} (W_{\mbf{p}})$, which is easily seen to imply that the state corresponding to $W_{\mbf{p}}$ is an eigenstate of $U_F$ (\emph{modulo} multiplication by a scalar).  

It is interesting to ask how many MUB-balanced states there are.  Consider the MUB-cycler $U_{G_0}$, where $G_0$ is the matrix defined by Eq.~(\ref{eq:G0Def}) with multiplicative order $2 m_0 = p-1$.  Let $|\psi\ra$ be the corresponding MUB-balanced state.  Since $|\psi\ra$ is the unique eigenstate of $U^{p-1}_{G_0}$ with eigenvalue 1 it will be left invariant by any Clifford unitary or anti-unitary $V$ such that
\be
V U^{p-1}_{G_0} V^{-1} = U^{t(p-1)}_{G_0}
\ee 
for some $t$.  One finds that the only possibilities are 
\ea{
V = U_{G_0^{u(p-1)/2}}
}
(corresponding to $t=1$) or
\ea{
V&= U_{G_0^{u(p-1)/2} F} \ , & F &= \bmt 0 & i_M \eta^{\frac{d+1}{2}} \\
i_M \eta^{-\frac{d+1}{2}}  & 0 \emt
}
(corresponding to $t=-1$), where in both cases $u$ can take any integral value in the range $0 \le u < 2(d+1)$ (to understand the form of the matrix $F$ observe that $FG_0F^{-1} = \eta^{d+1} G_0^{-1}$, implying $FG^{p-1}_0 F^{-1} = G^{-(p-1)}_0$).  We conjecture that there are no other extended Clifford unitaries or anti-unitaries leaving $|\psi\ra$ invariant (we have checked this conjecture in detail for $d=7, 11, 19$).  Since the order of the extended Clifford group is $2d^3(d^2-1)$ (see ref.~\cite{Gehles}) that  would mean that the number of MUB-balanced states is $d^3(d-1)/2$.

We were surprised by the results obtained in this section, as we had expected that our construction would yield many new MUB-balanced states, additional to the ones described by Amburg \emph{et al}.     However, our results suggest  that Amburg \emph{et al} have in fact constructed the entire set.  If so it would mean that such states form a highly distinguished geometrical structure.  Amburg \emph{et al} describe MUB-balanced states as ``states which have no right to exist''.  The same could be said of SICs.  But whereas there are, in most of the dimensions which have been examined, several different orbits of SICs, it looks as though there is only one orbit of MUB-balanced states.  If that were indeed the case it would mean that such states are very special indeed.
\section{Summary}
We have had to go through a large amount of detailed arguments, and our main results 
are summarized in theorems 1, 2, 5, 6, and 12. At the same time the 
picture we have arrived at is simple and appealing.

First, the g-unitaries themselves. They generalize the notion of anti-unitaries, but 
their action is restricted to vectors taking values in some special number field. 
We have been concerned with what is arguably the simplest example, when the number 
field is generated by some root of unity. Then the g-unitaries do play 
a role in the description of complete sets of MUB in odd dimensions, and 
as far as the transformations of the actual MUB vectors themselves are concerned 
they have a simple interpretation as rotations in Bloch space---just 
as ordinary unitaries always have. 

Mutually unbiased bases are interesting from many points of view. Here we have been 
interested in the projective transformations that permute them, and we saw 
that provided the dimension is a prime power $d = p^n$ where $n$ is odd the 
g-unitaries provide us with transformations that cycle through the entire set 
of $d+1$ bases. If the dimension equals 3 mod 4 some of these transformations 
are effected by anti-unitaries, but when the dimension equals 1 mod 4 it is 
necessary to turn to g-unitaries for this purpose. We have shown that every 
MUB-cycling g-unitary leaves one vector in Hilbert space invariant, and that 
this eigenvector has definite parity. If  $d = 3$ mod 4 this 
eigenvector is also an eigenvector of an anti-unitary operator which is itself 
MUB-cycling. (In even prime power dimensions unitary MUB-cyclers exist \cite{Sussman, 
Seyfarth}.)

If $d=3$ mod 4 the eigenvectors of MUB-cycling g-unitaries are MUB-balanced states in the sense of Wootters and Sussman~\cite{Sussman,Sussmanthesis} and Amburg et al.~\cite{ASSW}.  Our construction is a useful supplement to the work of Amburg et al. for two reasons.  In the first place it gives additional insight into the features of Hilbert space which are responsible for the existence of such states.  In their original paper Wootters and Sussman showed that one gets MUB-balanced states in the even prime power case by taking eigenvectors of cycling unitaries.  Amburg \emph{et al} demonstrated the existence of such states in odd prime power dimension equal to $3$ mod 4.  They also gave a very appealing formula for the Wigner function of such a state.  However, they did not demonstrate the connection with the even prime power case.  In this paper we have demonstrated such a connection:  to go from the even prime power case to the odd prime power case with $d=3$ mod 4 one merely has to replace the cycling unitaries of Wootters and Sussman with cycling $g$-unitaries.  In the second place the fact that we have exposed the underlying structural reasons for the existence of MUB-balanced states suggests the conjecture that we have in fact found \emph{every} such state.  We had expected that our construction would yield many new MUB-balanced states, additional to those found by Amburg et al.  However, our results suggest that such states are confined to a single orbit of the extended Clifford group.  If that is the case it would mean that they are very remarkable states indeed, which we feel may well repay further investigation.  In this connection we would particularly like to draw the reader's attention to the fact that  the distribution of their components 
is governed by Wigner's semi-circle law when $d$ is large \cite{ASSW} (a fact which has attracted the interest of the pure mathematics community~\cite{Katz}). 

Finally, as we noted in the introduction, the concept of a $g$-unitary originally arose in connection with the SIC-problem~\cite{AYAZ}.  Specifically the known SIC fiducials are all eigenvectors of a family of $g$-unitaries.   In this paper we have proved a number of results connected with the problem of finding the eigenvectors of a $g$-unitary.  Our discussion of this problem was not exhaustive, being confined to the case of MUB-cycling $g$-unitaries defined over a cyclotomic field.  The problem of extending our analysis to  the  $g$-unitaries which arise in the SIC problem is  non-trivial.  Nevertheless, our results may be regarded as a useful first step in the direction of solving that more difficult problem.

\section*{Acknowledgements}
IB thanks Jan-\AA ke Larsson for criticism at an early stage. DMA thanks Steve Flammia for  illuminating discussions concerning the eigenvectors of $g$-unitaries.  David Andersson 
provided useful information about MUB-balanced states. HBD was supported by the Natural Sciences and Engineering Research Council of Canada and the Vanier Canada Graduate Scholarship. DMA was supported by the IARPA MQCO program, by the ARC via EQuS project number CE11001013, and by the US Army Research Office grant numbers W911NF-14-1-0098 and W911NF-14-1-0103.


\begin{thebibliography}{99}
\bibitem{Wigner} E. Wigner: {\it Group Theory and its Application to the Quantum 
Mechanics of Atomic Spectra}, Academic Press, New York 1959.

\bibitem{AYAZ} D. M. Appleby, H. Yadsan-Appleby, and G. Zauner, {\it Galois automorphisms 
of a symmetric measurement}, Quantum Inf. Comp. {\bf 13} (2013) 672.

\bibitem{Arnold} V. I. Arnold: {\it Dynamics, Statistics, and Projective Geometry of 
Galois Fields}, Cambridge UP 2011.

\bibitem{Wootters} W. K. Wootters, {\it A Wigner-function formulation of finite-state 
quantum mechanics}, Ann. Phys. {\bf 176} (1976) 1.

\bibitem{Asa} I. Bengtsson and \AA . Ericsson, {\it Mutually unbiased bases and the 
complementarity polytope}, Open. Sys. Information Dyn. {\bf 12} (2005) 107.

\bibitem{Cormick} C. Cormick, E. F. Galv\~ao, D. Gottesman, J. P. Paz, and A. O. Pittenger, {\it Classicality in discrete Wigner functions}, 
Phys. Rev. A {\bf 73} (2006) 012301.

\bibitem{Sussman} W. K. Wootters and D. M. Sussman, {\it Discrete phase space 
and minimum-uncertainty states}, eprint quant-ph/0704.1277.

\bibitem{Seyfarth} U. Seyfarth, L. L. Sanchez-Soto, and G. Leuchs, {\it Structure 
of the sets of mutually unbiased bases with cyclic symmetry}, eprint arXiv:1404.3035.

\bibitem{Appleby} D. M. Appleby, {\it Properties of the extended Clifford group 
with applications to SIC-POVMs and MUBs}, arXiv:0909.5233.

\bibitem{ASSW} I. Amburg, R. Sharma, D. Sussman and W.K. Wootters, {\it States that ``look the same'' with respect to every basis in a mutually unbiased set}, eprint arXiv:1407.4074.

\bibitem{ADF} D. M. Appleby, H. B. Dang, and C. A. Fuchs, {\it Symmetric 
Informationally-Complete quantum states as analogues to orthonormal bases and 
Minimum Uncertainty States}, Entropy {\bf 16} (2014), 1484.

\bibitem{Sussmanthesis} D. M. Sussman: {\it Minimum-Uncertainty States and Rotational 
Invariance in Discrete Phase space}, BSc thesis, Williams College 2007.

\bibitem{Schappacher} N. Schappacher, {\it On the history of Hilbert's 12th problem: A 
comedy of errors}, in {\it Mat\'eriaux pour l'histoire des math\'ematiques au XXth 
siecle}, Soc. Math. France, Paris 1998.

\bibitem{ScottB} A. J. Scott, {\it Tight informationally complete quantum measurements}, J. Phys. A {\bf 39} (2006) 13507.

\bibitem{Englert} B.-G. Englert D. Kaszlikowski, H. K. Ng, W. K. Chua, J. \v{R}eh\'a\v{c}ek and J. Anders, {\it Efficient and robust quantum key distribution with minimal state tomography}, quant-ph/0412075.

\bibitem{Chen} B. Chen, T. Li and S. Fei, {\it General SIC-measurement based entanglement detection}, arXiv:1406.7820.

\bibitem{FuchsSchack} C. A. Fuchs and R. Schack, {\it Quantum-Bayesian Coherence: The no-nonsense version}, arXiv:1301.3274. 

\bibitem{Zauner} G. Zauner: {\it Quantendesigns. Grundz\"uge einer 
nichtkommutativen Designtheorie}, PhD thesis, Univ. Wien 1999. Available 
in English translation in Int. J. Quantum Inf. {\bf 9} (2011) 445.

\bibitem{Renes} J. M. Renes, R. Blume-Kohout, A. J. Scott, and C. M. Caves, 
{\it Symmetric informationally complete quantum measurements}, J. Math. Phys. 
{\bf 45} (2004) 2171.

\bibitem{DMA} D. M. Appleby, {\it SIC-POVMs and the extended Clifford group}, 
J. Math. Phys. {\bf 46} (2005) 052107.

\bibitem{Scott} A. J. Scott and M. Grassl, {\it Symmetric informationally complete 
positive-operator-valued measures: A new computer study}, J. Math. Phys. {\bf 51} (2010) 042203.

\bibitem{Fields} W. K. Wootters and B. D. Fields, {\it Optimal state-determination 
by mutually unbiased measurements}, Ann. Phys. {\bf 191} (1989) 363. 

\bibitem{Stewart} I. Stewart: {\it Galois Theory}, Chapman and Hall, London 1972.

\bibitem{Milne} J.S. Milne: \emph{Fields and Galois Theory}, available online at www.jmilne.org/math/.

\bibitem{Roman} S. Roman: \emph{Field Theory}, Graduate Texts in Mathematics no.~158, Springer, New York, 2005.

\bibitem{Ivanovic} I. D. Ivanovi\'c, {\it Geometrical description of state determination}, 
{\it J. Phys.} {\bf A14} (1981) 3241.

\bibitem{Vatan} S. Bandyopadhyay, P. O. Boykin, V. Roychowdhury and F. Vatan, {\it A new 
proof for the existence of mutually unbiased bases}, Algorithmica {\bf 34} (2002) 512.

\bibitem{Subhash} D. M. Appleby, I. Bengtsson, and S. Chaturvedi, {\it Spectra of 
phase point operators in odd prime dimensions and the extended Clifford group}, J. 
Math. Phys. {\bf 49} (2008) 012102.

\bibitem{Neuhauser} M. Neuhauser, {\it An explicit construction of the metaplectic 
representation over a finite field}, J. Lie Theory {\bf 12} (2002) 15.

\bibitem{Gottesman} D. Gottesman and I. L. Chuang, {\it Demonstrating the viability of 
universal quantum computation using teleportation and single-qubit operations}, 
Nature {\bf 402} (1999) 390.

\bibitem{Janusz} G.J. Janusz: \emph{Algebraic Number Fields}, Graduate Studies in Mathematics vol. 7,  American Mathematical Society, Providence RI, 1996.

\bibitem{Vourdas} A. Vourdas, {\it Quantum systems with finite Hilbert space: 
Galois fields in quantum mechanics}, J. Phys. {\bf A40} (2007) R285.

\bibitem{Gross} D. Gross, {\it Hudson's theorem for finite-dimensional quantum 
systems}, J. Math. Phys. {\bf 47} (2006) 122107.

\bibitem{Bennett} M. K. Bennett: {\it Affine and Projective Geometry}, Wiley, 
New York 1995. 

\bibitem{ericsson} \AA . Ericsson: {\it Exploring the Set of Quantum States}, 
PhD thesis, Univ. Stockholm 1999.

\bibitem{Gibbons} K. S. Gibbons, M. J. Hoffman, and W. K. Wootters, {\it Discrete 
phase space based on finite fields}, Phys. Rev. {\bf A70} (2004) 062101.

\bibitem{Galvao} A. Casaccino, E. F. Galv\~ao, and S. Severini, {\it Extrema of 
discrete Wigner functions and applications}, Phys. Rev. {\bf A78} (2008) 022310.

\bibitem{Wigner1} E. Wigner, {\it Normal form of antiunitary operators}, 
J. Math. Phys. {\bf 1} (1960) 409.

\bibitem{Greub} W. Greub: {\it Linear Algebra}, Graduate Texts in Mathematics no. 23, Springer, New York, 1975.

\bibitem{Gehles} K. E. Gehles: {\it Ordinary Characters of Finite Special Linear Groups}, MSc Thesis, Univ.\ of St. Andrews 2002.

\bibitem{Katz} N. M. Katz, {\it Rigid local systems and a question of Wootters}, 
Commun. Number Theory and Physics {\bf 6} (2012) 223.

\end{thebibliography}
\end{document}